\def\debut{\begin{itemize}\item[{\bf [[}]\small}
\def\term{\hfill {\bf]]} \end{itemize} }
\newtheorem{claimN}{Claim}
\renewenvironment{proof}[1][]{\par \noindent {\bf Proof:#1}\ }{\hfill$\Box$\\}
\title{Parameterized Domination in Circle Graphs\thanks{A preliminary conference version
of this work appeared in the \emph{Proceedings of the 38th International
Workshop on Graph-Theoretic Concepts in Computer Science (WG), Jerusalem,
Israel, June 2012}. The third author was partially supported by EPSRC Grant
EP/G043434/1. The other authors were partially supported by AGAPE
(ANR-09-BLAN-0159) and GRATOS (ANR-09-JCJC-0041) projects (France).}}
\author{Nicolas Bousquet~\inst{1}, Daniel Gon\c{c}alves~\inst{1}, George B. Mertzios~\inst{2},\\Christophe Paul~\inst{1}, Ignasi Sau~\inst{1}, and St\'ephan Thomass\'e~\inst{3}}
\authorrunning{N. Bousquet, D. Gon\c{c}alves, G. B. Mertzios, C. Paul, I. Sau, and S. Thomass\'e}
\titlerunning{Parameterized Domination in Circle Graphs}
\institute{AlGCo project-team, CNRS, LIRMM, Montpellier, France.\\
\email{\{FirstName.FamilyName\}@lirmm.fr}\\
\and School of Engineering and Computing Sciences, Durham University, United
Kingdom.\\
\email{george.mertzios@durham.ac.uk}\\
\and Laboratoire LIP (U. Lyon, CNRS, ENS Lyon, INRIA, UCBL), Lyon,
France.\\
\email{stephan.thomasse@ens-lyon.fr}}
\begin{document}

\maketitle

\vspace{-.15cm}
\begin{abstract}
A \emph{circle graph} is the intersection graph of a set of chords in a circle.
Keil~[\emph{Discrete Applied Mathematics}, 42(1):51-63, 1993] proved that
\textsc{Dominating Set}, \textsc{Connected Dominating Set}, and \textsc{Total
Dominating Set} are \textsc{NP}-complete in circle graphs. To the best of our
knowledge, nothing was known about the parameterized complexity of these
problems in circle graphs. In this paper we prove the following results, which
contribute in this direction:
\begin{itemize}
\item[$\bullet$] \textsc{Dominating Set}, \textsc{Independent Dominating Set},
\textsc{Connected Dominating Set}, \textsc{Total Dominating Set}, and
\textsc{Acyclic Dominating Set} are $W[1]$-hard in circle graphs, parameterized
by the size of the solution.
\item[$\bullet$] Whereas both \textsc{Connected Dominating Set} and \textsc{Acyclic
Dominating Set} are $W[1]$-hard in circle graphs, it turns out that
\textsc{Connected Acyclic Dominating Set} is polynomial-time solvable in circle
graphs.
\item[$\bullet$] If $T$ is a \emph{given} tree, deciding whether a
circle graph has a dominating set isomorphic to $T$ is \textsc{NP}-complete
when $T$ is in the input, and \textsc{FPT} when parameterized by $|V(T)|$. We
prove that the FPT algorithm is subexponential.
\end{itemize}
\vspace{0.25cm} \textbf{Keywords:} circle graphs; domination problems;
parameterized complexity; parameterized algorithms; dynamic programming;
constrained domination.
\end{abstract}

\section{Introduction}
\label{sec:intro}

A \emph{circle graph} is the intersection graph of a set of chords in a circle
(see Fig.~\ref{fig:example} for an example of a circle graph $G$ together with
a circle representation of it). The class of circle graphs has been extensively
studied in the literature, due in part to its applications to
sorting~\cite{EvIt71} and VLSI design~\cite{She92}. Many problems which are
\textsc{NP}-hard in general graphs turn out to be solvable in polynomial time
when restricted to circle graphs. For instance, this is the case of
\textsc{Maximum Clique} and \textsc{Maximum Independent Set}~\cite{Gav73},
\textsc{Treewidth}~\cite{Klo96}, \textsc{Minimum Feedback Vertex
Set}~\cite{Gav08}, \textsc{Recognition}~\cite{GPTC11,Spi94}, \textsc{Dominating
Clique}~\cite{Kei93}, or \textsc{$3$-Colorability}~\cite{Ung92}.

\begin{figure}[tb]
\center \vspace{-.4cm}
\includegraphics[width=0.7\textwidth]{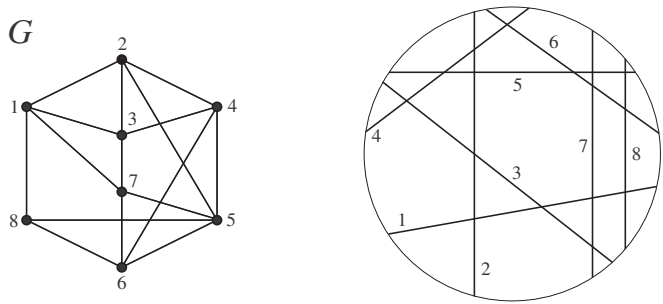}
\vspace{-.2cm} \caption{A circle graph $G$ on $8$ vertices together with a
circle representation of it.\label{fig:example}}\vspace{-.3cm}
\end{figure}

But still a few problems remain \textsc{NP}-complete in circle graphs, like
\textsc{$k$-Colorability} for $k \geq 4$~\cite{Ung88}, \textsc{Hamiltonian
Cycle}~\cite{Dam89}, or \textsc{Minimum Clique Cover}~\cite{KeSt06}. In this
article we study a variety of domination problems in circle graphs, from a
parameterized complexity perspective. A \emph{dominating set} in a graph
$G=(V,E)$ is a subset $S \subseteq V$ such that every vertex in $V \setminus S$
has at least one neighbor in $S$. Some extra conditions can be imposed to a
dominating set. For instance, if $S \subseteq V$ is a dominating set and $G[S]$
is connected (resp. acyclic, an independent set, a graph without isolated
vertices, a tree, a path), then $S$ is called a \emph{connected} (resp.
\emph{acyclic}, \emph{independent}, \emph{total}, \emph{tree}, \emph{path})
\emph{dominating set}. In the example of Fig.~\ref{fig:example}, vertices $1$
and $5$ (resp. $3$, $4$, and $6$) induce an independent (resp. connected)
dominating set. The corresponding minimization problems are defined in the
natural way. Given a set of graphs $\mathcal{G}$, the \textsc{Minimum
$\mathcal{G}$-Dominating Set} problem consists in, given a graph $G$, finding a
dominating set $S \subseteq V(G)$ of $G$ of minimum cardinality such that
$G[S]$ is isomorphic to some graph in $\mathcal{G}$. Throughout the article, we
may omit the word ``\textsc{Minimum}'' when referring to a specific problem.

For an introduction to parameterized complexity theory, see for
instance~\cite{DoFe99,Nie06,FlGr06}. A decision problem with input size $n$ and
parameter $k$ having an algorithm which solves it in time $f(k)\cdot
n^{\mathcal{O}(1)}$ (for some computable function $f$ depending only on $k$) is
called \emph{fixed-parameter tractable}, or \textsc{FPT} for short. The
parameterized problems which are $W[i]$-hard for some $i \geq 1$ are not likely
to be \textsc{FPT}~\cite{DoFe99,Nie06,FlGr06}. A parameterized problem is in
\textsc{XP} if it can be solved in time $f(k) \cdot n^{g(k)}$, for some
(unrestricted) functions $f$ and $g$. The parameterized versions of the above
domination problems when parameterized by the cardinality of a solution are
also defined naturally.


\paragraph{\textbf{\emph{Previous work.}}} \textsc{Dominating Set}  is one of the
most prominent classical graph-theoretic \textsc{NP}-complete
problems~\cite{GaJo79}, and has been studied intensively in the literature.
Keil~\cite{Kei93} proved that \textsc{Dominating Set}, \textsc{Connected
Dominating Set}, and \textsc{Total Dominating Set} are \textsc{NP}-complete
when restricted to circle graphs, and Damian and Pemmaraju~\cite{DaPe99} proved
that \textsc{Independent Dominating Set} is also \textsc{NP}-complete in circle
graphs, answering an open question from Keil~\cite{Kei93}.

Hedetniemi, Hedetniemi, and Rall~\cite{HHR00} introduced acyclic domination in
graphs. In particular, they proved that \textsc{Acyclic Dominating Set} can be
solved in polynomial time in interval graphs and proper circular-arc graphs.
Xu, Kang, and Shan~\cite{XKS06} proved that \textsc{Acyclic Dominating Set} is
linear-time solvable in bipartite permutation graphs. The complexity status of
\textsc{Acyclic Dominating Set} in circle graphs was unknown.

In the theory of parameterized complexity~\cite{FlGr06,Nie06,DoFe99},
\textsc{Dominating Set} also plays a fundamental role, being the paradigm of a
$W[2]$-hard problem. For some graph classes, like planar graphs,
\textsc{Dominating Set} remains \textsc{NP}-complete~\cite{GaJo79} but becomes
\textsc{FPT} when parameterized by the size of the solution~\cite{ABF+02}.
Other more recent examples can be found in $H$-minor-free graphs~\cite{AlGu08}
and claw-free graphs~\cite{CPP+11}.

The parameterized complexity of domination problems has been also studied in
geometric graphs, like $k$-polygon graphs~\cite{ElSt93}, multiple-interval
graphs and their complements~\cite{FHRV09,ZhJi11}, $k$-gap interval
graphs~\cite{FGG+12}, or graphs defined by the intersection of unit squares,
unit disks, or line segments~\cite{Mar06}. But to the best of our knowledge,
the parameterized complexity of the aforementioned domination problems in
circle graphs was open.

\paragraph{\textbf{\emph{Our contribution.}}} In this paper we prove the
following results, which settle the parameterized complexity of a number of
domination problems in circle graphs:
\begin{itemize}
\item[$\bullet$] In Section~\ref{sec:hard}, we prove that \textsc{Dominating Set}, \textsc{Connected Dominating Set},
\textsc{Total Dominating Set}, \textsc{Independent Dominating Set}, and
\textsc{Acyclic Dominating Set} are $W[1]$-hard in circle graphs, parameterized
by the size of the solution. Note that \textsc{Acyclic Dominating Set} was not
even known to be \textsc{NP}-hard in circle graphs. The reductions are from
$k$-\textsc{Colored Clique} in general graphs.
\item[$\bullet$] Whereas both \textsc{Connected Dominating Set} and \textsc{Acyclic Dominating Set} are $W[1]$-hard in circle graphs,
it turns out that \textsc{Connected Acyclic Dominating Set} is polynomial-time
solvable in circle graphs. This is proved in Section~\ref{sec:algoTrees}.
\item[$\bullet$] Furthermore, if $T$ is a \emph{given} tree, we prove
that the problem of deciding whether a circle graph has a dominating set
isomorphic to $T$ is \textsc{NP}-complete (Section~\ref{sec:hardTrees}) but
\textsc{FPT} when parameterized by $|V(T)|$ (Section~\ref{sec:FPTalgoTrees}).
The $\textsc{NP}$-completeness reduction is from 3-\textsc{Partition}, and we
prove that the running time of the \textsc{FPT} algorithm is subexponential. As
a corollary of the algorithm presented in Section~\ref{sec:FPTalgoTrees}, we
also deduce that, if $T$ has bounded degree, then deciding whether a circle
graph has a dominating set isomorphic to $T$ can be solved in polynomial time.
\end{itemize}

\paragraph{\textbf{\emph{Further research.}}} Some interesting questions remain open. We proved
that several domination problems are $W[1]$-hard in circle graphs. Are they
$W[1]$-complete, or may they also be $W[2]$-hard? On the other hand, we proved
that finding a dominating set isomorphic to a tree can be done in polynomial
time. It could be interesting to generalize this result to dominating sets
isomorphic to a connected graph of fixed treewidth. Finally, even if
\textsc{Dominating Set} parameterized by treewidth is \textsc{FPT} in general
graphs due to Courcelle's theorem~\cite{Cou88}, it is not plausible that it has
a polynomial kernel in general graphs~\cite{BDFH09}. It may be the case that
the problem admits a polynomial kernel parameterized by treewidth (or by vertex
cover) when restricted to circle graphs.

\section{Hardness results}
\label{sec:hard}

In this section we prove hardness results for a number of domination problems
in circle graphs. In order to prove the $W[1]$-hardness of the domination
problems, we provide two families of reductions. Namely, in
Section~\ref{sec:hardness1} we prove the hardness of \textsc{Dominating Set},
\textsc{Connected Dominating Set}, and \textsc{Total Dominating Set}, and in
Section~\ref{sec:hardness2} we prove the hardness of \textsc{Independent
Dominating Set} and \textsc{Acyclic Dominating Set}. Finally, we prove the
\textsc{NP}-completeness for trees in Section~\ref{sec:hardTrees}.

For better visibility, some figures of this section have colors, but these
colors are not indispensable for completely understanding the depicted
constructions. Before stating the hardness results, we need to introduce the
following parameterized problem, proved to be $W[1]$-hard in~\cite{FHRV09}.

\begin{center}
\begin{boxedminipage}{.8\textwidth}

$k$-\textsc{Colored Clique}

\begin{tabular}{ r l }

\textit{~~~~Instance:} & A graph $G=(V,E)$ and a coloring of $V$ using $k$ colors.\\
\textit{Parameter:} & $k$.\\
\textit{Question:} & Does there exist a clique of size $k$ in $G$ containing\\
& exactly one vertex from each color?
\end{tabular}
\end{boxedminipage}
\end{center}

Note that in an instance of $k$-\textsc{Colored Clique}, we can assume that
there is no edge between any pair of vertices colored with the same color.
Also, we can assume that for each $1 \leq i \leq k$, the number of vertices
colored with color $i$ is the same. Indeed, given an instance $G$, we can
consider an equivalent instance $G'$ obtained by putting together $k!$ disjoint
copies of $G$, one for each permutation of the color classes.


In a representation of a circle graph, we will always consider the circle
oriented anticlockwise. Given three points $a,b,c$ in the circle, by $a < b <
c$ we mean that starting from $a$ and moving anticlockwise along the circle,
$b$ comes before $c$. In a circle representation, we say that two chords with
endpoints $(a,b)$ and $(c,d)$ are \emph{parallel twins} if $a < c < d < b$, and
there is no other endpoint of a chord between $a$ and $c$, nor between $d$ and
$b$. Note that for any pair of parallel twins $(a,b)$ and $(c,d)$, we can slide
$c$ (resp. $d$) arbitrarily close to $a$ (resp. $b$) without modifying the
circle representation.

\subsection{Hardness of domination, and connected and total domination}
\label{sec:hardness1}

We start with the main result of this section.

\begin{theorem}\label{dominationw1}
\textsc{Dominating Set} is $W[1]$-hard in circle graphs, when parameterized by
the size of the solution.
\end{theorem}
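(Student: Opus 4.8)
The plan is to reduce from $k$-\textsc{Colored Clique}, following the standard methodology for $W[1]$-hardness in geometric intersection graphs: build a circle representation whose chords are organized into groups corresponding to the color classes, so that a small dominating set is forced to ``select'' exactly one vertex per color, and the domination requirement across groups encodes the edge (adjacency) constraints of the clique. Concretely, given an instance $G=(V,E)$ with a proper $k$-coloring and (by the normalization already noted in the excerpt) equal-size color classes, I would place along the circle $k$ consecutive \emph{selection gadgets}, one per color $i$, each consisting of a bundle of parallel chords --- one \emph{vertex-chord} for every vertex of color $i$ --- together with a few auxiliary \emph{guard} chords (pendant-like parallel twins, exploiting the sliding remark about parallel twins) whose only cheap dominators are the vertex-chords of that color. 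This guarantees that any dominating set of size at most $f(k)$ must contain at least one vertex-chord from each color class, hence exactly one if we tune the budget to $k$ (plus a fixed number of unavoidable guard selections).

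Next I would engineer the \emph{edge-checking} part. For each ordered pair of distinct colors $\{i,j\}$, the chords of color $i$ and color $j$ must be arranged so that the chord $v_i$ (of color $i$) crosses the chord $v_j$ (of color $j$) in the representation \emph{iff} $v_iv_j \in E(G)$; then add for each pair $\{i,j\}$ a guard chord $c_{ij}$ that is crossed precisely by those pairs of chords that form a circle-edge between the $i$-block and the $j$-block, so that $c_{ij}$ is dominated by the selected pair only when the corresponding clique edge is present. The geometric realizability is the delicate point: one cannot in general realize an arbitrary crossing pattern between two chord bundles, so the construction will instead route, for each color $j \ne i$, a dedicated ``copy'' of the color-$i$ bundle (or a dedicated interval on the circle per pair $\{i,j\}$) and use \emph{propagation/consistency gadgets} --- again built from parallel twins and guards --- forcing all copies of color $i$ to select the same vertex. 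This blows the instance up polynomially in $|V(G)|$ and increases the parameter to $k' = \mathcal{O}(k^2)$, which is fine for $W[1]$-hardness.

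The two directions of correctness are then routine: if $G$ has a colorful clique $\{v_1,\dots,v_k\}$, selecting the corresponding vertex-chords (one per block, consistently across all copies) plus the forced guards yields a dominating set of the prescribed size $f(k)$; conversely, a dominating set of size $\le f(k)$ is forced by the guards to pick exactly one vertex-chord per color, the consistency gadgets force the choices to agree across copies, and the domination of every edge-guard $c_{ij}$ certifies that all required edges are present, so the chosen vertices form a colorful clique. I expect the \textbf{main obstacle} to be precisely the geometric design: laying out the $\mathcal{O}(k^2)$ bundles and all guard/propagation chords around a single circle so that (a) the only inexpensive way to dominate each guard is via its intended vertex-chords, (b) no unintended crossings create cheap alternative dominators, and (c) the whole picture is a valid circle representation of polynomial size; verifying (a)--(c) by a careful case analysis on which chords cross which is where the real work lies, and the parallel-twin device introduced before the theorem is the key tool for keeping the guards ``rigid''.
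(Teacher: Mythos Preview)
Your high-level strategy --- reduce from $k$-\textsc{Colored Clique} and force a small dominating set to ``select'' one vertex per color while encoding adjacency --- matches the paper. But the concrete edge-checking mechanism you sketch is not the paper's, and as written it has a gap.

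You propose that the dominating set consist of \emph{vertex-chords} (one per color, plus forced guards), and that for each color pair a guard chord $c_{ij}$ be ``crossed precisely by those pairs of chords that form a circle-edge.'' Domination, however, is unary: $c_{ij}$ is dominated by $S$ iff some \emph{single} chord of $S$ crosses it. So you would need, for every choice of $v_i$ (color $i$) and $v_j$ (color $j$), that at least one of the two crosses $c_{ij}$ iff $v_iv_j\in E(G)$ --- a condition on pairs that a single fixed chord cannot encode for an arbitrary bipartite relation. You acknowledge the realizability issue and gesture at copies plus propagation gadgets, but the sketch never says what object actually witnesses the presence of an edge, and this is precisely where the construction has to do the work. (For \textsc{Independent Dominating Set} the paper does use something closer to your picture, with chords for \emph{non}-edges; but for plain \textsc{Dominating Set} nothing prevents those chords from entering the solution, so that route needs a different argument.)

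The paper sidesteps the problem by \emph{not} selecting vertex-chords. Its dominating set, of size exactly $k(k+1)/2$, consists of $k$ \emph{memory chords} (one per color, encoding the chosen vertex) together with $\binom{k}{2}$ \emph{outer chords}, one per color pair, where the outer chords are literally the edges of $G$ drawn between dedicated clusters. Each color gets a section of $k+1$ clusters; an \emph{extremal chord} in every cluster forces exactly one solution-endpoint per cluster (yielding the tight lower bound $k(k+1)/2$ and excluding extremal/inner chords from the solution); parallel-twin \emph{inner chords} force the selected value to be non-increasing across consecutive clusters; and the memory chord equates the first and last cluster, so all clusters of a section carry the same value. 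The outer chord picked between sections $i$ and $j$ must then hit both section values, which says exactly that the two selected vertices are adjacent in $G$. In short, the edge check is not ``is a guard dominated?'' but ``which edge-chord lands in the solution?'', and the parallel-twin device is used for the monotonicity/consistency step rather than for guarding edge constraints.
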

\begin{proof}
We shall reduce the $k$-\textsc{Colored Clique} problem to the problem of
finding a dominating set of size at most $k(k+1)/2$ in circle graphs. Let $k$
be an integer and let $G$ be a $k$-colored graph on $kn$ vertices such that $n$
vertices are colored with color $i$ for all $1 \leq i \leq k$. For every $1
\leq i \leq k$, we denote by $x_j^i$ the vertices of color $i$, with $1 \leq j
\leq n$. Let us prove that $G$ has a $k$-colored clique of size $k$ if and only
if the following circle graph $C$ has a dominating set of size at most
$k(k+1)/2$. We choose an arbitrary point of the circle as the \emph{origin}.
The circle graph $C$ is defined as follows:


\begin{figure}[t]
\center \vspace{-.55cm}
\includegraphics[width=0.56\textwidth]{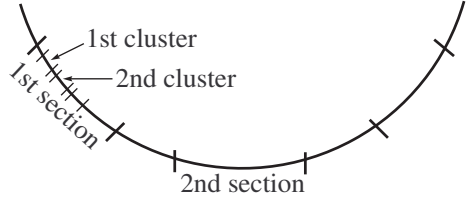}
\vspace{-.4cm} \caption{Sections and clusters in the reduction of
Theorem~\ref{dominationw1}.} \label{example}
\end{figure}

\begin{itemize}
\item[$\bullet$] We divide the circle into $k$ disjoint open intervals
$]s_i,s'_i[$ for $1 \leq i \leq k$, called \emph{sections}. Each section is
divided into $k+1$ disjoint intervals $]c_{ij},c'_{ij}[$ for $1 \leq j \leq
k+1$, called \emph{clusters}
 (see Fig.~\ref{example} for an illustration). Each cluster has $n$
particular points denoted by $1,\ldots,n$ following the order of the circle.
These intervals are constructed in such a way that the origin is not in a
section.

\item[$\bullet$] Sections are numbered from $1$ to $k$ following the
anticlockwise order from the origin. Similarly, the clusters inside each
section are numbered from $1$ to $k+1$.


\item[$\bullet$] For each $1 \leq i \leq k, 1 \leq j \leq k+1$, we
add a chord with endpoints $c_{ij}$ and $c_{ij}'$, which we call the
\emph{extremal chord} of the $j$-cluster of the $i$-th section.

\item[$\bullet$] For each $1 \leq i \leq k$ and $1 \leq j \leq k$, we add
chords between the $j$-th and the $(j+1)$-th clusters of the $i$-th section as
follows. For each $0 \leq l \leq n$, we add two parallel twin chords, each
having one endpoint in the interval $]l,l+1[$ of the $j$-th cluster, and the
other endpoint in the interval $]l,l+1[$ of the $(j+1)$-th cluster. These
chords are called \emph{inner chords} (see Fig.~\ref{inner} for an
illustration). We note that the endpoints of the inner chords inside each
interval can be chosen arbitrarily. The interval $]0,1[$ is the interval
between $c_{ij}$ and the point $1$, and similarly $]n,n+1[$ is the interval
between the point $n$ and $c_{ij}'$.

\item[$\bullet$] We also add chords between the first and the last clusters of
each section. For each $1 \leq i \leq k$ and $1 \leq l \leq n$, we add a chord
joining the point $l$ of the first cluster and the point $l$ of the last
cluster of the $i$-th section. For each $1 \leq i \leq k$, these chords are
called the \emph{$i$-th memory chords}. 

\item[$\bullet$] Extremal, inner, and memory chords will ensure some structure on
the solution. On the other hand, the following chords will simulate the
behavior of the original graph. In fact, the $n$ particular points in each
cluster of the $i$-th section will simulate the behavior of the $n$ vertices of
color $i$ in $G$. Let $i < j$. The chords from the $i$-th section to the $j$-th
section are between the $j$-th cluster of the $i$-th section and the $(i+1)$-th
cluster of the $j$-th section. Between this pair of clusters, we add a chord
joining the point $h$ (in the $i$-th section) and the point $l$ (in the $j$-th
section) if and only if $x^i_hx^j_l \in E(G)$. We say that such a chord is
called \emph{associated} with an edge of the graph $G$, and such chords are
called \emph{outer chords}. In other words, there is an outer chord in $C$ if
the corresponding vertices are connected in $G$.\end{itemize}
\begin{figure}
\center \vspace{-.45cm}
\input{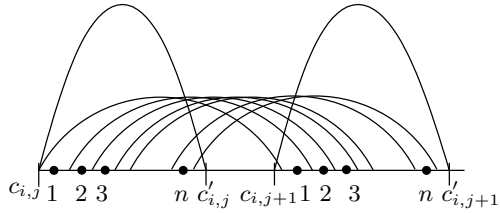}
\caption{Representation of the chords between the $j$-th and the $(j+1)$-th
cluster of the $i$-th section. The higher chords are extremal chords. The
others are inner chords and have to be replaced by two parallel twin chords.}
\label{inner}
\end{figure}

Intuitively, the idea of the above construction is as follows. For each $1 \leq
i \leq k$, among the $k+1$ clusters in the $i$-th section, the first and the
last one do not contain endpoints of outer chords, and are only used for
technical reasons (as discussed below). The remaining $k-1$ clusters in the
$i$-th section capture the edges of $G$ between vertices of color $i$ and
vertices of the remaining $k-1$ colors. Namely, for any two distinct colors $i$
and $j$, there is a cluster in the $i$-th section and a cluster in the $j$-th
section such that the outer chords between these two clusters correspond to the
edges in $G$ between colors $i$ and $j$. The rest of the proof is structured
along a series of claims.

\begin{claimN}
\label{1domw1} If there exists a $k$-colored clique in $G$, then there exists a
dominating set of size $k(k+1)/2$ in $C$.
\end{claimN}
\begin{proof}
Assume that there is a $k$-colored clique $K$ in $G$ and let us denote by $k_i$
the integer such that $x_{k_i}^i$ is the vertex of color $i$ in this clique.
Let $\mathcal{D}$ be the following set of chords. For each section $1 \leq i
\leq k$, we add to $\mathcal{D}$ the memory chord joining the points $k_i$ of
the first and the last clusters. We also add in $\mathcal{D}$ the outer chords
associated with the edges of the $k$-colored clique. The set $\mathcal{D}$
contains $k(k+1)/2$ chords: $k$ memory chords and $k(k-1)/2$ outer chords. Let
us prove that $\mathcal{D}$ is a dominating set.

The extremal chords are dominated, since $\mathcal{D}$ has exactly one endpoint
in each cluster. Indeed, there is an endpoint in the first and the last cluster
of section because of the memory chords of $\mathcal{D}$. There is an endpoint
in the other clusters because of the outer chord associated with the edge of
the $k$-colored clique. The inner chords are also dominated. Indeed, for each
section $i$, the endpoint of the chord of $\mathcal{D}$ is $k_i$, for all
clusters $j$ such that $1 \leq j \leq k+1$. Thus, for all $1 \leq l \leq n$,
the inner chords between the intervals $]l,l+1[$ of the $j$-th cluster and
$]l,l+1[$ of the $(j+1)$-th cluster are dominated by the chord of $\mathcal{D}$
with endpoint in the $j$-th section if $l \leq k_i-1$, or by the chord of the
$(j+1)$-th section otherwise.

The outer chords are dominated by the memory chords of $\mathcal{D}$, since the
outer chords have their endpoints in two different sections. Finally, the
memory chords are also dominated by the outer chords of $\mathcal{D}$ for the
same reason.
\end{proof}

In the following we will state some properties about the dominating sets in $C$
of size $k(k+1)/2$.
\begin{claimN}
\label{oneincluster} A dominating set in $C$ has size at least $k(k+1)/2$, and
a dominating set of this size has exactly one endpoint in each cluster.
\end{claimN}
\begin{proof}
The \emph{interval of a chord} linking $x$ and $y$, with $x < y$, is the
interval $[x,y]$. One can note that when $\ell$ chords have pairwise disjoint
intervals, at least $\lceil \ell/2 \rceil$ chords are necessary to dominate
them.


The intervals $[c_{i,j},c'_{i,j}]$ of the extremal chords are disjoint by
assumption. Since there are $k(k+1)$ extremal chords, a dominating set has size
at least $k(k+1)/2$. And if there is a dominating set of such size, it must
have exactly one endpoint in each interval, i.e., one endpoint in each cluster.
\end{proof}

\begin{claimN}\label{exactlyone}
A dominating set of size $k(k+1)/2$ in $C$ contains no inner nor extremal
chord.
\end{claimN}
\begin{proof}
Let $\mathcal{D}$ be a dominating set in $C$ of size $k(k+1)/2$. It contains no
extremal chords, since both endpoints of an extremal chord are in the same
cluster, which is impossible by Claim~\ref{oneincluster}. If $\mathcal{D}$
contains an inner chord $c$, the parallel twin of $c$ in $C$ is dominated by
some other chord $c'$. But then $c \cup c'$ intersect at most three clusters,
which is again impossible by Claim~\ref{oneincluster}.
\end{proof}

By Claim~\ref{exactlyone}, a dominating set in $C$ of size $k(k+1)/2$ contains
only memory and outer chords. Thus, the unique (by Claim~\ref{oneincluster})
endpoint of the dominating set in each cluster is one of the points
$\{1,\ldots,n\}$, and we call it the \emph{value} of a cluster.
Fig.~\ref{formsolution} illustrates the general form of a solution.

\begin{figure}[t]
\center \vspace{-.3cm}
\includegraphics[scale=0.5]{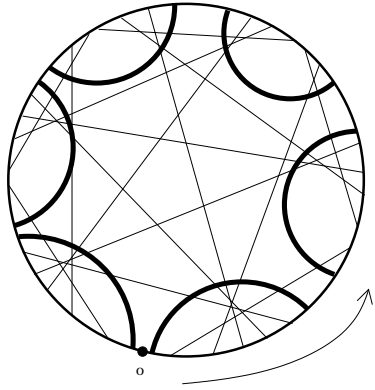}
\caption{The general form of a solution in the reduction of
Theorem~\ref{dominationw1}. The thick chords are memory chords and the other
ones are outer chords. The origin is depicted with a small ``o''.}
\label{formsolution}
\end{figure}

\begin{claimN}\label{selection}
Assume that $C$ contains a dominating set of size $k(k+1)/2$. Then, in a given
section, the value of a cluster does not increase between consecutive clusters.
\end{claimN}
\begin{proof}
Assume that in a given arbitrary section, the value of the $j$-th cluster is
$l$. The inner chords between the interval $]l,l+1[$ of the $j$-th cluster and
the interval $]l,l+1[$ of the $(j+1)$-th cluster have to be dominated. Since
the value of the $j$-th cluster is $l$, they are not dominated in the $j$-th
cluster. Therefore, in order to ensure the domination of these chords, the
value of the $(j+1)$-th cluster is at most $l$.
\end{proof}

\begin{claimN}
\label{claim:value} Assume that $C$ contains a dominating set of size
$k(k+1)/2$. Then, for each $1 \leq i \leq k$, all the clusters of the $i$-th
section have the same value.
\end{claimN}
\begin{proof}
Let  $\mathcal{D}$ be such a dominating set. In a given section, the endpoints
of $\mathcal{D}$ in the first and the last clusters are endpoints of a memory
chord, and for all $l$, they link the point $l$ of the first cluster to the
point $l$ of the last one. Thus, the first and the last clusters have the same
value. Since by Claim~\ref{selection} the value of a cluster decreases between
consecutive clusters, the value of the clusters of the same section is
necessarily constant.
\end{proof}

The \emph{value} of a section is the value of the clusters in this section
(note that it is well-defined by Claim~\ref{claim:value}). The \emph{vertex
associated with the $i$-th section} is the vertex $x^i_k$ if the value of the
$i$-th section is $k$.

\begin{claimN}\label{2domw1}
If there is a dominating set in $C$ of size $k(k+1)/2$, then for each pair
$(i,j)$ with $1 \leq i < j \leq k$, the vertex associated with the $i$-th
section is adjacent in $G$ to the vertex associated with the $j$-th section.
Therefore, $G$ has a $k$-colored clique.
\end{claimN}
\begin{proof}
Let $i$ and $j$ be two sections with $i<j$, and let $x^i_k$ and $x^j_l$ be the
vertices associated with these two sections, respectively. By
Claim~\ref{claim:value}, the chord of the dominating set in the $j$-th cluster
of the $i$-th section has a well-defined endpoint $k$, and the chord of the
dominating set in the $(i-1)$-th cluster of the $j$-th section has a
well-defined endpoint $l$. The vertex $x_k^i$ associated with the $i$-th
section is adjacent in $G$ to the vertex $x_l^j$ associated with the $j$-th
section. Indeed, the chords having endpoints in these clusters are exactly the
chords between these two clusters, and there is a chord if and only if there is
an edge between the corresponding vertices in $G$.
\end{proof}

Claims~\ref{1domw1} and \ref{2domw1} together ensure that $C$ has a dominating
set of size $k(k+1)/2$ if and only if $G$ has a $k$-colored clique. The
reduction can be easily done in polynomial time, and the parameters of the
problems are polynomially equivalent. Thus, \textsc{Dominating Set} in circle
graphs is $W[1]$-hard. This completes the proof of Theorem~\ref{dominationw1}.
\end{proof}

From Theorem~\ref{dominationw1} we can easily deduce the $W[1]$-hardness of two
other domination problems in circle graphs.

\begin{corollary}\label{connecteddom}
\textsc{Connected Dominating Set} and \textsc{Total Dominating Set} are
$W[1]$-hard in circle graphs, when parameterized by the size of the solution.
\end{corollary}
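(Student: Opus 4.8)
The plan is to modify the construction $C$ from the proof of Theorem~\ref{dominationw1} slightly, so that the resulting circle graph $C'$ has a \emph{connected} (resp. \emph{total}) dominating set of size $k(k+1)/2$ if and only if $C$ (equivalently, $G$) has the desired structure, and otherwise every connected (resp. total) dominating set is strictly larger. The key observation is that the dominating set $\mathcal{D}$ exhibited in Claim~\ref{1domw1} is already almost connected: it consists of $k$ memory chords (one per section, joining the first and last clusters) and $k(k-1)/2$ outer chords (the edges of the $k$-colored clique). Inside each section the memory chord crosses every outer chord incident to that section, so the chords of $\mathcal{D}$ lying in or touching a fixed section induce a connected subgraph; what is missing is that distinct sections need not talk to each other directly. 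I would therefore add, for each $1 \leq i \leq k$, a single new ``connector'' chord joining consecutive sections $i$ and $i+1$ (indices mod $k$) in such a way that this chord crosses the memory chords of both sections regardless of their values, but crosses none of the extremal chords and does not help dominate anything new; then $\mathcal{D}$ together with these $k$ connector chords is a connected dominating set of size $k(k+1)/2 + k$. Reset the target size to $k(k+1)/2 + k$.

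First I would check that Claims~\ref{oneincluster}, \ref{exactlyone}, \ref{selection}, \ref{claim:value}, \ref{2domw1} go through with the new target: the extremal chords still have pairwise disjoint intervals, so a dominating set (connected or not) still needs at least $k(k+1)/2$ chords with one endpoint per cluster, and the connector chords, being few and designed not to lie inside any cluster, do not interfere with the counting arguments of Claim~\ref{exactlyone}. The point is that a connected dominating set of size exactly $k(k+1)/2 + k$ must contain a sub-multiset of $k(k+1)/2$ chords realizing the ``one endpoint per cluster'' structure (the remaining $k$ chords are there only to glue the pieces together), and on that sub-multiset Claims~\ref{selection}–\ref{claim:value} apply verbatim, so each section has a well-defined value and Claim~\ref{2domw1} yields a $k$-colored clique. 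Conversely the forward direction is the construction of $\mathcal{D}$ plus the $k$ connectors described above. For \textsc{Total Dominating Set} essentially the same construction works, since a connected dominating set on at least two vertices has no isolated vertex and is in particular total; one only needs to double-check that the new instance indeed has no total dominating set smaller than $k(k+1)/2+k$ of a degenerate kind, which again reduces to the extremal-chord counting.

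The main obstacle I anticipate is designing the connector chords so that they truly are ``neutral'': they must (i) not create any new way to dominate the extremal, inner, memory or outer chords that would let one cheat the size bound, (ii) not themselves require extra chords to be dominated (so each connector should be crossed by some chord that is already forced to be in any small solution — for instance, have the $i$-th connector cross the extremal chord of the first cluster of section $i$, which is always dominated by the forced endpoint there), and (iii) genuinely connect the $k$ section-gadgets into one connected subgraph when added to $\mathcal{D}$. Threading a chord that crosses the two memory chords of sections $i$ and $i+1$ while avoiding all endpoints inside the clusters is a matter of routing it through the region between sections near the section boundaries $s'_i$ and $s_{i+1}$; one must verify that this can be done simultaneously for all $i$ without the connectors forcing unwanted crossings among themselves. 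Once the gadget is pinned down, the rest is a routine re-run of the claims, and since the parameter changes only from $k(k+1)/2$ to $k(k+1)/2+k$, which is still polynomially bounded in $k$, the $W[1]$-hardness of both \textsc{Connected Dominating Set} and \textsc{Total Dominating Set} in circle graphs follows.
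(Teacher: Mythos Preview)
Your proposal rests on a false premise. You write that in the dominating set $\mathcal{D}$ of Claim~\ref{1domw1}, ``what is missing is that distinct sections need not talk to each other directly.'' In fact they do: an outer chord associated with the clique edge between colors $i$ and $j$ has one endpoint in an inner cluster of section $i$ and the other in an inner cluster of section~$j$, so it crosses the memory chord of section $i$ \emph{and} the memory chord of section $j$. Since $\mathcal{D}$ contains one outer chord for every pair of sections, the $k$ memory chords are all linked through these outer chords, and $\mathcal{D}$ is already connected.

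This is exactly the paper's argument: the \emph{same} instance $C$ and the \emph{same} target $k(k+1)/2$ work for \textsc{Connected Dominating Set} (and hence for \textsc{Total Dominating Set}, since a connected set on more than one vertex has no isolated vertex). One direction is the observation above; the other is immediate, because a connected (or total) dominating set of size $k(k+1)/2$ is in particular a dominating set of that size, so Claims~\ref{oneincluster}--\ref{2domw1} apply verbatim and yield a $k$-colored clique.

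Your modification with connector chords is therefore unnecessary, and it would also create real work you only sketch: once the target becomes $k(k+1)/2+k$, the tight counting of Claim~\ref{oneincluster} no longer forces exactly one endpoint per cluster, and your assertion that the $k$ extra chords ``are there only to glue the pieces together'' would require a genuine argument ruling out solutions that spend those extra chords inside clusters instead.
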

\begin{proof}
In the construction of Theorem~\ref{dominationw1}, if there is a dominating set
of size $k(k+1)/2$ in $C$, it is necessarily connected (see the form of the
solution in Fig. \ref{formsolution}). Indeed, the memory chords ensure the
connectivity between all the chords with one endpoint in a section. Since there
is a chord between each pair of sections, the dominating set is connected.
Finally, note that a connected dominating set is also a total dominating set,
as it contains no isolated vertices.
\end{proof}

%

\subsection{Hardness of independent and acyclic domination}
\label{sec:hardness2}

We proceed to describe our second construction in order to prove parameterized
reductions for domination problems in circle graphs.

\begin{theorem}
\label{th:independent} \textsc{Independent Dominating Set} is $W[1]$-hard in
circle graphs.
\end{theorem}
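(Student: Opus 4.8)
The plan is to reduce from $k$-\textsc{Colored Clique}, building a new circle graph $C'$ whose \emph{minimum} dominating set is forced to be an independent set of chords precisely when it encodes a colored clique. I reuse the rigid ``skeleton'' of the construction of Theorem~\ref{dominationw1}: split the circle into $k$ sections, each into clusters carrying $n$ marked points, keep the extremal chords (pairwise disjoint intervals, one per cluster) and the parallel-twin inner chords between consecutive clusters. Exactly as in Claims~\ref{oneincluster}--\ref{claim:value}, these chords alone force every dominating set meeting the budget to avoid extremal and inner chords, to have exactly one endpoint per cluster --- hence to assign a ``value'' in $\{1,\dots,n\}$ to each cluster --- and to make these values non-increasing along a section; with a clamping mechanism again forcing the value of a section to be constant, a minimum dominating set selects one vertex per color, and these vertices are pairwise adjacent in $G$ by the argument of Claim~\ref{2domw1}. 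Since those claims use only the skeleton they hold verbatim in $C'$, so any dominating set of $C'$ of the target size --- in particular any independent, or acyclic, one --- already yields a colored clique. The ``solution $\Rightarrow$ clique'' direction is thus free, and all the work is the reverse direction: turning a colored clique into an \emph{independent} dominating set of the target size.

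This needs a genuinely new construction, not a patch of Theorem~\ref{dominationw1}, because the non-skeleton chords of the old certificate are intrinsically crossing. A clamp chord that makes a section's value constant has to span the whole section, so it crosses every chord that leaves that section; and two chords encoding edges between ``interleaved'' color pairs $(i,j)$ and $(i',j')$ (with $i<i'<j<j'$) separate the circle and must cross however they are drawn. The design principle I would adopt is therefore: never let the certificate \emph{contain} a long chord that carries information; instead force the \emph{domination} of a chain of small parallel-twin ``test'' chords, exactly as the inner chords already propagate a cluster value to the next cluster. Every chord the certificate actually contains is then local to a small region, so the certificate as a whole can be drawn as a laminar (non-crossing) family.

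Concretely I would (a) realise the constant value $k_i$ of section $i$ by a disjoint family of short chords, one inside each pair of consecutive working clusters (after pairing them up), plus a clamp chord from the first to the last cluster, laid out so that the certificate stays laminar; (b) replace each outer chord between cluster $j$ of section $i$ and cluster $i+1$ of section $j$ by a small \emph{verification region} carrying its own private extremal chord --- so the budget is forced to spend exactly one chord there --- together with a parallel-twin test gadget that ``reads'' the values reaching it from the $i$-side and the $j$-side, arranged so the region is entirely dominated by a single non-crossing chord if and only if those values are some $h,\ell$ with $x^i_h x^j_\ell \in E(G)$; and (c) set the budget $B$ to the total number of cluster-pairs, section-clamps, and verification regions, which is a fixed polynomial in $k$, keeping the reduction parameterized. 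It then remains to verify the three standard ingredients: the extremal chords (of clusters and of regions) force every size-$B$ dominating set to meet each cluster and each region exactly once, so the structural claims survive; the certificate built from a colored clique is non-crossing, dominates every chord of $C'$, and has exactly $B$ chords; and, being an independent set of chords, that certificate is in particular acyclic, so the same construction simultaneously shows that \textsc{Acyclic Dominating Set} is $W[1]$-hard in circle graphs.

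The step I expect to be the main obstacle is (b): designing the parallel-twin test gadget of a verification region so that it has both the ``one cheap chord is forced here'' property needed for the budget argument and the ``fully dominated by a non-crossing chord iff the prescribed edge of $G$ is present'' property needed for correctness, \emph{and} transmitting each section's chosen value to all of its verification regions without ever using a long chord in the certificate, while keeping the regions small enough that the value stays globally consistent along a section. Once this gadget is available, the clamping gadget (a) and the cluster-pairing are routine variations on the inner-chord mechanism of Theorem~\ref{dominationw1}, and the two directions of the equivalence follow the template of Claims~\ref{1domw1} and~\ref{2domw1}.
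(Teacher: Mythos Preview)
Your proposal is not a proof: it is a plan whose decisive step, the gadget (b), is explicitly left open. You correctly diagnose why the certificate of Theorem~\ref{dominationw1} cannot be made independent --- the memory chord of a section crosses every outer chord leaving that section, and outer chords for interleaved color pairs must cross --- but the cure you sketch re-introduces the same difficulty one level down. A verification region for the pair $(i,j)$ must ``read'' the value chosen in section~$i$ and the value chosen in section~$j$; whatever carries that information from section~$i$ to the region and from section~$j$ to the region plays the role of the old outer chord, and for interleaved pairs these carriers separate the circle just as before. Saying ``transmit each section's value to all of its verification regions without ever using a long chord in the certificate'' names the problem rather than solving it. Until (b) is actually built and shown to be dominable by a single non-crossing chord per region, the reverse direction (clique $\Rightarrow$ independent dominating set of size~$B$) is missing.

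The paper sidesteps all of this with a genuinely different and much lighter construction. It abandons the section/cluster skeleton entirely and targets a budget of only~$2k$. For each color~$i$ it places, inside a private interval~$I^i$, two cliques $L^i=\{l^i_1,\dots,l^i_n\}$ and $R^i=\{r^i_1,\dots,r^i_n\}$ interleaved so that $l^i_p$ and $r^i_q$ are non-crossing iff $p\le q$, and between them short open intervals $v^i_1,\dots,v^i_n$ playing the role of the vertices of color~$i$. Two packets of $2k{+}1$ parallel twins force any dominating set of size $\le 2k$ to pick exactly one chord from $L^i$ and one from $R^i$, and any such pair leaves at least one $v^i_{j_i}$ uncovered. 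The only inter-section chords are \emph{outer chords for non-edges} of~$G$, one endpoint in $v^i_p$ and the other in $v^j_q$ whenever $x^i_p x^j_q\notin E(G)$. Thus the certificate never contains an inter-section chord; it consists of $k$ disjoint non-crossing pairs $\{l^i_{j_i},r^i_{j_i}\}$, automatically independent. Domination of the outer chords forces the uncovered intervals to form a clique. The trick you were looking for is: encode the selection by what is \emph{left uncovered}, and make the inter-section chords encode \emph{non-edges} so they need only be dominated, never selected.
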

\begin{proof} We present a parameterized reduction from $k$-\textsc{Colored
Clique} in a general graph to the problem of finding an independent dominating
set of size at most $2k$ in a circle graph. Let $G$ be the input $k$-colored
graph with color classes $X^1,\ldots,X^k \subseteq V(G)$. Let
$x^{i}_1,\ldots,x^{i}_n$ be the vertices belonging to the color class $X^i
\subseteq V(G)$, in an arbitrary order. We proceed to build a
circle graph $H$ by defining its circle representation. 
Let $I^1,\ldots,I^k$ be a collection of $k$ disjoint intervals in the circle,
which will we associated with the $k$ colors. For $1 \leq i \leq k$, we proceed
to construct an induced subgraph $H^i$ of $H$ whose chords have all endpoints
in the interval $I^i$, which we visit from left to right. Throughout the
construction, cf. Fig.~\ref{fig:independent} for an example with $n=4$.

\begin{figure}[t]
\flushleft \vspace{-1.2cm}
\includegraphics[width=1.0\textwidth]{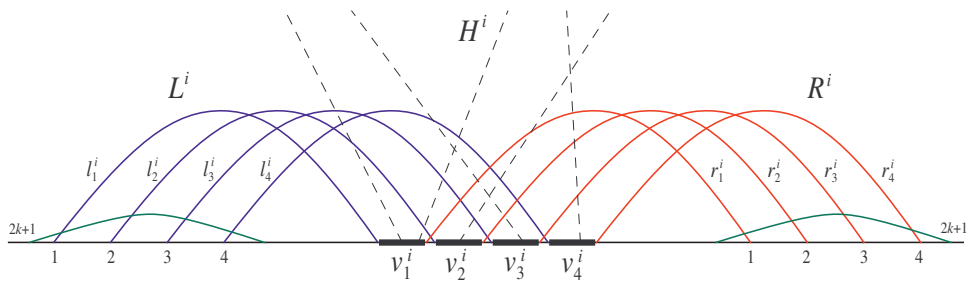}
\vspace{-.6cm} \caption{Gadget $H^i$ in interval $I^i$ used in the proof of
Theorem~\ref{th:independent}, corresponding to a color class $X^i$ of the
$k$-colored input graph $G$. The dashed chords correspond to non-edges of
$G$.\label{fig:independent}}\vspace{-.3cm}
\end{figure}

We start by adding two cliques on $n$ vertices $L^i$ and $R^i$, with chords
$l^{i}_1,\ldots,l^{i}_n$ and $r^{i}_1,\ldots,r^{i}_n$, respectively, in the
following way. The endpoints of $L^i$ and $R^i$ are placed in three disjoint
subintervals of $I^i$, such that the first subinterval contains, in this order,
the left endpoints of $l^{i}_1,\ldots,l^{i}_n$. The second subinterval contains
the right endpoints of $L^i$ and the left endpoints of $R^i$, in the order
$l^{i}_1, r^{i}_1, l^{i}_2, r^{i}_2, \ldots, r^{i}_{n-1},l^{i}_n, r^{i}_n$.
Finally, the third subinterval contains, in this order, the right endpoints of
$r^{i}_1,\ldots,r^{i}_n$. The blue (resp. red) chords in
Fig.~\ref{fig:independent} correspond to $L^i$ (resp. $R^i$).

For $1 \leq j \leq n$, we define the interval $v^i_j$ as the open interval
between the right endpoint of $l^i_j$ and the left endpoint of $r^i_j$; cf. the
thick intervals in Fig.~\ref{fig:independent}. Such an interval $v^i_j$ will
correspond to vertex $x^i_j$ of $G$.




We also add two sets of $2k+1$ parallel twin chords whose left endpoints are
placed exactly before the left (resp. right) endpoint of $l^{i}_1$ (resp.
$r^{i}_1$) and whose right endpoints are placed exactly after the left (resp.
right) endpoint of $l^{i}_n$ (resp. $r^{i}_n$); cf. the green chords in
Fig.~\ref{fig:independent}. We call these chords \emph{parallel chords}. This
completes the construction of $H^i$.

Finally, for each pair of vertices $x^{i}_p,x^{j}_q$ of $G$ such that $i \neq
j$ and $\{x^{i}_p,x^{j}_q\} \notin E(G)$, we add to $H$ a chord $c^{i,j}_{p,q}$
between the interval $v^{i}_p$ in $H^i$ and the interval $v^{j}_q$ in $H^j$;
cf. the dashed chords in Fig.~\ref{fig:independent}. We call these chords
\emph{outer chords}. That is, the outer chords of $H$ correspond to non-edges
of $G$. This completes the construction of the circle graph $H$.

We now claim that $G$ has a $k$-colored clique if and only if $H$ has an
independent dominating set of size at most $2k$.

Indeed, let first $K$ be a $k$-colored clique in $G$ containing vertices
$x^{1}_{j_1}, x^{2}_{j_2}, \ldots, x^{k}_{j_k}$, and let us obtain from $K$ an
independent dominating set $S$ in $H$. For $1 \leq i \leq k$, the set $S$
contains the two chords $l^{i}_{j_i}$ and $r^{i}_{j_i}$ from $H^i$. Note that
$S$ is indeed an independent set. For $1 \leq i \leq k$, since both $L^i$ and
$R^i$ are cliques, all the chords in $L^i$ and $R^i$ are dominated by $S$.
Clearly, all parallel chords are also dominated by $S$. The only outer chords
with one endpoint in $H^i$ which are neither dominated by $l^{i}_{j_i}$ nor by
$r^{i}_{j_i}$ are those with its endpoint in the interval $v^{i}_{j_i}$. Let
$c$ be such an outer chord, and suppose that the other endpoint of $c$ is in
$H^{\ell}$. As $K$ is a clique in $G$, it follows that there is no outer chord
in $H$ with one endpoint in $v^{i}_{j_i}$ and the other in
$v^{\ell}_{j_{\ell}}$, and therefore necessarily the chord $c$ is dominated
either by $l^{i}_{j_i}$ or by $r^{\ell}_{j_{\ell}}$.

Conversely, assume that $H$ has an independent dominating set $S$ with $|S|
\leq 2k$. Note that for $1 \leq i \leq k$, because of the two sets of $2k+1$
parallel chords in $H^i$, at least one of the chords in $L^i$ and at least one
of the chords in $R^i$ must belong to $S$, so $|S| \geq 2k$. Therefore, it
follows that $|S|=2k$ and that $S$ contains in $H^i$, for $1 \leq i \leq k$, a
pair of non-crossing chords in $L^i$ and $R^i$. Note that in each $H^i$, the
two chords belonging to $S$ must leave uncovered at least one of the intervals
(corresponding to vertices) $v^{i}_1, \ldots,v^{i}_n$. Let $v^{i}_{j_{i}}$ and
$v^{\ell}_{j_{\ell}}$ be two uncovered vertices in two distinct intervals
$I^{i}$ and $I^{\ell}$, respectively. By the construction of $H$, it holds that
the vertices $x^{i}_{j_{i}}$ and $x^{\ell}_{j_{\ell}}$ must be adjacent in $G$,
as otherwise the outer chord in $H$ between the intervals $v^{i}_{j_{i}}$ and
$v^{\ell}_{j_{\ell}}$ would not be dominated by $S$. Hence, a $k$-colored
clique in $G$ can be obtained by selecting in each $H^i$ any of the uncovered
vertices.\end{proof}

%

The construction of Theorem~\ref{th:independent} can be appropriately modified
to deal with the case when the dominating set is required to induce an acyclic
subgraph.

\begin{theorem}
\label{th:acyclic} \textsc{Acyclic Dominating Set} is $W[1]$-hard in circle
graphs.
\end{theorem}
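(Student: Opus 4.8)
The plan is to reuse the circle graph $H$ of Theorem~\ref{th:independent} almost verbatim and to let acyclicity do the job that independence did there. Recall that in Theorem~\ref{th:independent} the cardinality bound alone forces a dominating set $S$ of size at most $2k$ to pick exactly one chord of each clique $L^i$ and $R^i$, say $l^i_{p_i}$ and $r^i_{q_i}$; independence then forces the pair $(l^i_{p_i},r^i_{q_i})$ to be \emph{non-crossing}, and this is what makes the reduction go through. Without independence a size-$2k$ dominating set is still of this form, but the selected pair may cross. To forbid that by acyclicity, I would add to each gadget $H^i$ a single new chord $g^i$ that crosses every chord of $L^i$ and of $R^i$ --- but, crucially, no outer chord --- together with a bundle of $\Theta(k)$ pairwise-nested auxiliary chords whose only neighbour is $g^i$, enlarge all the parallel-twin bundles of Theorem~\ref{th:independent} to match, and set the new target $B$ to be one larger per colour (so $B=3k$). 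By the parallel-twin argument of Theorem~\ref{th:independent}, any dominating set of size at most $B$ still contains a chord of each $L^i$ and $R^i$; the new bundle forces $g^i$ in as well; hence $S=\{l^i_{p_i},r^i_{q_i},g^i : 1\le i\le k\}$ exactly, with no outer, parallel or auxiliary chord in $S$.

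The role of $g^i$ is that $\{l^i_{p_i},r^i_{q_i},g^i\}$ is a triangle precisely when $l^i_{p_i}$ crosses $r^i_{q_i}$; so if $G[S]$ is acyclic then the selected pair in each $H^i$ is non-crossing, which is exactly the property independence supplied for free before. From here the two directions are as in Theorem~\ref{th:independent}. Given a $k$-coloured clique $\{x^i_{j_i}\}$, the set $S=\{l^i_{j_i},r^i_{j_i},g^i : i\}$ has size $B$, dominates $H$, and induces the disjoint union of the $k$ paths $l^i_{j_i}-g^i-r^i_{j_i}$, which is a forest, so $S$ is an acyclic dominating set. Conversely, given an acyclic dominating set $S$ of size at most $B$, the non-crossing property of each selected pair leaves, in every $H^i$, a non-empty block of consecutive ``vertex intervals'' $v^i_{p_i},\dots,v^i_{q_i}$ uncovered by $S$ --- here one uses exactly that $g^i$ crosses no outer chord, so the outer chords are dominated precisely as in Theorem~\ref{th:independent} --- and picking one uncovered vertex per colour yields a $k$-coloured clique, since an outer chord between two chosen intervals would be left undominated.

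I expect the main obstacle to be the geometric realisation of $g^i$: it must cross \emph{all} of $L^i\cup R^i$ and yet \emph{none} of the outer chords, while the auxiliary bundle pinning it into $S$ must be crossed by $g^i$ alone. These requirements pull against one another in the layout of Theorem~\ref{th:independent}, because the chords of $L^i$ and those of $R^i$ have their endpoints on ``opposite sides'' of the part of $I^i$ where the outer chords attach, so any chord crossing all of $L^i\cup R^i$ is pushed to span exactly that part. Making this work requires a careful local re-arrangement of the endpoints inside each $I^i$, together with a verification that the clique structure of $L^i$ and $R^i$, the domination pattern between $L^i\cup R^i$ and the outer chords, and the singleton choice of the extracted vertex in the forward direction are all preserved; this bookkeeping is the delicate (but otherwise routine) core of the argument, after which one concludes that \textsc{Acyclic Dominating Set} is $W[1]$-hard in circle graphs.
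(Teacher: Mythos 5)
There is a genuine gap, and it is exactly the point you flag and then defer as ``routine bookkeeping'': the witness chord $g^i$ cannot be realized. In the layout of Theorem~\ref{th:independent}, a chord crosses $l^i_j$ iff exactly one of its endpoints lies strictly between the two endpoints of $l^i_j$. Since the left endpoints of $L^i$ all precede its right endpoints, a chord with both endpoints in $I^i$ crosses \emph{all} of $L^i$ only in two ways: either one endpoint lies in the common intersection of the $L^i$-arcs (between the left endpoint of $l^i_n$ and the right endpoint of $l^i_1$) and the other lies outside their union, or the two endpoints sit in ``matching'' gaps $(\,$left of $l^i_m$, left of $l^i_{m+1})$ and $($right of $l^i_m$, right of $l^i_{m+1})$. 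The second option places both endpoints outside every $R^i$-arc's common intersection and is incompatible with also crossing all of $R^i$; the first option forces the second endpoint of $g^i$ into the common intersection of the $R^i$-arcs, i.e.\ after the left endpoint of $r^i_n$. But then $g^i$ spans the entire middle subinterval containing all the vertex intervals $v^i_1,\dots,v^i_n$, so it crosses \emph{every} outer chord incident to $H^i$ (each such chord has exactly one endpoint there and one in another gadget). That destroys the reduction: $g^i\in S$ dominates all outer chords of $H^i$ regardless of which pair $(l^i_{p},r^i_{q})$ is selected, so the converse direction no longer extracts a clique. This is not a matter of re-ordering endpoints: the interleaving $l^i_1,r^i_1,l^i_2,r^i_2,\dots$ is precisely what makes a non-crossing pair leave some $v^i_j$ uncovered, and it forces the two common-intersection regions to lie on opposite sides of the vertex intervals; any chord (or connected chain of chords) bridging them must span the vertex intervals. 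So the ``triangle witness'' gadget is obstructed, not merely delicate.

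The paper resolves this differently. Instead of one chord crossing all of $L^i\cup R^i$, it adds a clique $D^i$ of $n$ \emph{distance} chords $d^i_1,\dots,d^i_n$ (each with a parallel twin), positioned so that a pair $l^i_{j_1},r^i_{j_2}$ dominates all of $D^i$ if and only if it is non-crossing ($j_1\le j_2$); crossing pairs are thus excluded by a \emph{domination failure} rather than by a forbidden cycle. Acyclicity is then used elsewhere, in a case analysis you do not need but the paper does: since the budget stays at $2k$ and nothing forces specific chords into $S$, one must separately rule out solutions containing outer chords, which the paper does by a counting argument showing such a solution would induce a subgraph of minimum degree two, hence a cycle. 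If you want to rescue your plan, you would have to redesign the entire gadget $H^i$ so that the non-crossing condition becomes visible to a single local witness without that witness seeing the outer chords --- which is essentially designing a new reduction, not patching the old one.
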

\begin{proof} As in Theorem~\ref{th:independent}, the reduction is again from
$k$-\textsc{Colored Clique}. From a $k$-colored $G$, we build a circle graph
$H$ that contains all the chords defined in the proof of
Theorem~\ref{th:independent}, plus the following ones for each $H^i$, $1 \leq i
\leq k$ (cf. Fig.~\ref{fig:acyclic} for an illustration): we add another set of
$2k+1$ parallel chords whose left (resp. right) endpoints are placed exactly
before (resp. after) the right (resp. left) endpoint of $l^{i}_1$ (resp.
$r^{i}_n$); cf. the middle green chords in Fig.~\ref{fig:acyclic}. We call
these three sets of $2k+1$ chords \emph{parallel} chords.

\begin{figure}[t]
\flushleft \vspace{-1.2cm}
\includegraphics[width=1.00\textwidth]{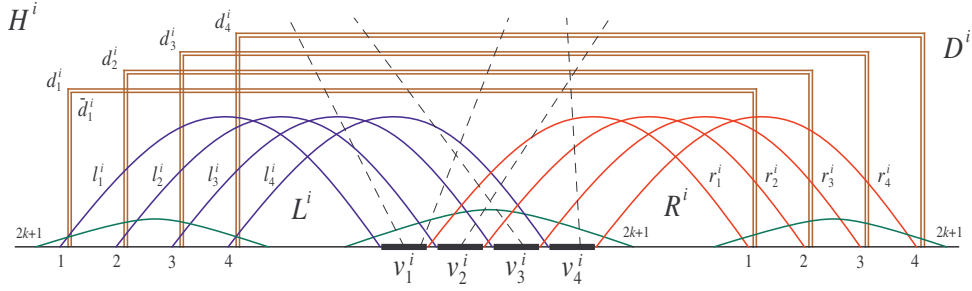}
\vspace{-.7cm} \caption{Gadget $H^i$ in interval $I^i$ used in the proof of
Theorem~\ref{th:acyclic}, corresponding to a color class $X^i$ of the
$k$-colored input graph $G$. The dashed chords correspond to non-edges of
$G$.\label{fig:acyclic}}\vspace{-.3cm}
\end{figure}

Furthermore, we add a clique with $n$ chords $d^{i}_1,\ldots,d^{i}_n$ such that
for $1 \leq j \leq n$ the left (resp. right) endpoint of $d^{i}_j$ is placed
exactly after the left (resp. right) endpoint of $l^{i}_j$ (resp. $r^{i}_j$).
Finally, for each such a chord $d^{i}_j$ we add a parallel twin chord, denoted
by $\bar{d}^{i}_j$. We call these $2t$ chords \emph{distance} chords, and their
union is denoted by $D^i$; cf. the brown edges in Fig.~\ref{fig:acyclic}. This
completes the construction of $H$. Note that a pair of chords $l^{i}_{j_1}$ and
$r^{i}_{j_2}$ dominates all the distance chords in $H^i$ if and only if
$l^{i}_{j_1}$ and $r^{i}_{j_2}$ do not cross, that is, if and only if $j_1 \leq
j_2$.

We now claim that $G$ has a $k$-colored clique if and only if $H$ has an
acyclic dominating set of size at most $2k$.

Indeed, let first $K$ be a $k$-colored clique in $G$. An independent (hence,
acyclic) dominating set $S$ in $H$ of size $2k$ can be obtained from $K$
exactly as explained in the proof of Theorem~\ref{th:independent}. Note that in
each $H^i$, the distance chords in $D^i$ are indeed dominated by $S$ because
the corresponding chords in $L^i$ and $R^i$ do not cross.

Conversely, assume that $H$ has an acyclic dominating set $S$ with $|S| \leq
2k$. First assume that $S$ contains no outer chord. By the parallel chords in
each $H^i$ (cf. the green chords in Fig.~\ref{fig:acyclic}), it is easy to
check that $S$ must contain at least two chords in each $H^i$, and therefore we
have that $|S| = 2k$. We now distinguish several cases according to which two
chords in a generic $H^i$ can belong to $S$. Let $\{u,v\} = S \cap V(H^i)$.
Because of the parallel chords, it is clear that only one $L^i$, $R^i$, or
$D^i$ cannot contain both $u$ and $v$. It is also clear that no parallel chord
can be in $S$. If $u \in D^i$ and $v \in L^i$, let w.l.o.g. $u = d^{i}_{j_1}$
and $v = l^{i}_{j_2}$. Since $v$ must dominate the twin chord
$\bar{d}^{i}_{j_1}$, it follows by the construction of $H^i$ that $j_2 \leq
j_1$, and therefore the chord $r^{i}_{j_1}$ is dominated neither by $u$ nor by
$v$ (cf. Fig.~\ref{fig:acyclic}), a contradiction. The case $u \in D^i$ and $v
\in R^i$ is similar. Therefore, we may assume w.l.o.g. that $u=l^{i}_{j_1}$ and
$u=r^{i}_{j_2}$. Note that such a pair of chords $l^{i}_{j_1}$ and
$r^{i}_{j_2}$ dominates all the distance chords in $H^i$ if and only if
$l^{i}_{j_1}$ and $r^{i}_{j_2}$ do not cross.
Hence, as in the proof of Theorem~\ref{th:independent}, for each $H^i$, $1 \leq
i \leq k$, the two chords belonging to $S$ leave at least one uncovered
interval $v^{i}_{j_i}$ (corresponding to vertex $x^{i}_{j_i}$), and in order
for all the outer chords in $H$ to be dominated, the union of the $k$ uncovered
vertices must induce a $k$-colored clique in $G$. Therefore, if $H$ has an
acyclic dominating set of size at most $2k$ with no outer chord, then $G$ has a
$k$-colored clique. Note that in this case $S$ consists of an independent set.

Otherwise, the acyclic dominating set $S$ contains some outer chord. Assume
w.l.o.g. that $S$ contains outer chords with at least one endpoint in each of
$H^1,\ldots,H^p$ (with $p \geq 2$, as each chord has two endpoints), and no
outer chord with an endpoint in any of $H^{p+1},\ldots,H^{k}$ (only if $p<k$).
By the arguments above, for $p+1 \leq i \leq k$ it follows that $S$ contains
exactly one chord in $L^i$ and exactly one chord in $R^i$. For $1 \leq i \leq
p$, in order for all the parallel chords in $H^i$ to be covered, $S$ must
contain some chords in $L^i$, $R^i$, or $D^i$. As by assumption $|S| \leq 2k$,
in at least one $H^i$ with $1 \leq i \leq p$, $S$ contains exactly one chord in
$L^i$, $R^i$, or $D^i$. By the construction of $H$, this chord must necessarily
be a distance chord, as otherwise some parallel chords in $H^i$ would not be
dominated by $S$ (cf. Fig.~\ref{fig:acyclic}). Assume w.l.o.g. that in $H^1$
only the distance chord $d^{1}_{j_1}$ and one outer chord outgoing from
interval $v^{1}_{j_2}$ belong to $S$. But then if $j_2 \geq j_1$ (resp. $j_2 <
j_1$) the chord $r^{1}_{j_2}$ (resp. $l^{1}_{j_2}$) is not dominated by $S$, a
contradiction. We conclude that $S$ contains at least two outer chords in
$H^1$.

By a simple counting argument, as $|S| \leq 2k$ it follows that for $1 \leq i
\leq p$, $S$ contains exactly one distance chord and two outer chords from each
of $H^1,\ldots,H^p$ (and, in particular, $|S| = 2k$). But then the subgraph of
$H[S]$ induced by the chords belonging to $V(H^1)\cup \ldots \cup V(H^p)$ has
minimum degree at least two, and therefore it contains a cycle, a contradiction
to the assumption that $H[S]$ is acyclic. Thus, $S$ cannot contain any outer
chord, and the theorem follows.\end{proof}

%

%

\subsection{NP-completeness for a given tree}
\label{sec:hardTrees}

The last result of this section is the \textsc{NP}-completeness when the
dominating set is restricted to be isomorphic to a {\sl given} tree.

\begin{theorem}
\label{thm:treeHARD} Let $T$ be a given tree. Then \textsc{$\{T\}$-Dominating
Set} is \textsc{NP}-complete in circle graphs when $T$ is part of the input.
\end{theorem}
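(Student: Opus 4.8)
The plan is to reduce from \textsc{3-Partition}, which is \textsc{NP}-complete in the strong sense: given $3m$ positive integers $a_1,\dots,a_{3m}$ and a bound $B$ with $\sum_i a_i = mB$ and $B/4 < a_i < B/2$ for all $i$ (so that in any partition of $\{a_1,\dots,a_{3m}\}$ into parts of sum $B$ every part has exactly three elements), decide whether such a partition into $m$ triples exists. Because the problem is \emph{strongly} \textsc{NP}-complete we may assume the $a_i$ are encoded in unary, so every size appearing below is polynomially bounded in the instance. Membership of \textsc{$\{T\}$-Dominating Set} in \textsc{NP} is immediate: a certificate is an ordered list of $|V(T)|$ vertices of the circle graph, and in polynomial time one checks that it is a dominating set and that the induced vertex map onto the (given, hence ordered) vertex set of $T$ is a graph isomorphism; this avoids any appeal to general graph isomorphism.

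From an instance $(a_1,\dots,a_{3m};B)$ I would construct in polynomial time a circle graph $C$ and a tree $T$ such that $C$ has a dominating set inducing a copy of $T$ iff the instance is a \textsc{yes}-instance. The tree $T$ will be a \emph{broom-like caterpillar}: a spine $u_1 u_2 \cdots u_m$ together with, for each $1 \le j \le m$, a bundle of exactly $B$ pendant leaves attached to $u_j$; thus $|V(T)| = m(B+1)$ and $T$ has $m$ vertices of unbounded degree, which is exactly the regime in which one should expect hardness given the companion \textsc{FPT}/polynomial result for bounded-degree trees. The circle graph $C$ is assembled from three kinds of ingredients, in the spirit of the extremal/inner/memory chords of Theorem~\ref{dominationw1}: (i) a \emph{spine region} containing, for $1 \le j \le m$, a long chord $u_j^*$ together with rigid frame chords (pairwise disjoint intervals plus parallel twins, as in Claims~\ref{oneincluster}--\ref{exactlyone}) that force any dominating set of the target size $m(B+1)$ to contain exactly $u_1^*,\dots,u_m^*$ and to realize them as a path; (ii) for each bin $j$ a \emph{bin gadget} crossed by $u_j^*$ whose frame chords can only be dominated if $u_j^*$ additionally crosses exactly $B$ further selected chords that pairwise do not cross (the ``$B$ leaves'' of $u_j$), and which never cross $u_{j'}^*$ or the leaves of $u_{j'}$ for $j'\neq j$; and (iii) for each integer $a_i$ an \emph{item gadget}: a bundle of $a_i$ pairwise non-crossing leaf chords equipped with connector chords (playing the role of the memory chords) that let the bundle be routed so that its leaves cross $u_j^*$ for a single, freely chosen $j$, while its internal frame chords force it to be used all-or-nothing and by at most one bin.

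For the forward direction, given a valid partition I route, for each triple $\{a_p,a_q,a_r\}$ assigned to bin $j$, the three corresponding item bundles into $u_j^*$; together with $u_1^*,\dots,u_m^*$ this is a set of $m + mB = m(B+1)$ chords, and a routine verification (exactly of the flavour of Claim~\ref{1domw1}) shows that it dominates $C$ and induces precisely $T$. Conversely, suppose $C$ has a dominating set $S$ with $C[S]\cong T$; then $|S| = m(B+1)$. Using the disjoint-interval and parallel-twin arguments (as in Claims~\ref{oneincluster}--\ref{exactlyone}) I would first show that $S$ consists of $u_1^*,\dots,u_m^*$ (forming the spine) plus, for each $j$, exactly $B$ leaf chords crossing $u_j^*$ and pairwise non-crossing; then, using the item-gadget frame chords, that these $B$ leaves split into whole item bundles, each bundle used by exactly one value of $j$. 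The size identity $\sum_i a_i = mB$ then forces every bundle to be used exactly once and every $u_j^*$ to receive total length exactly $B$; with $B/4 < a_i < B/2$ this is precisely a partition into triples of sum $B$.

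The main obstacle is the \emph{rigidity} of the gadgets: the frame chords must be designed so tightly that the only induced subgraphs of $C$ on $m(B+1)$ vertices that are isomorphic to $T$ \emph{and} dominate $C$ are the intended ``one spine path, three item bundles per bin'' configurations — ruling out a bundle split between two bins, a leaf of some $u_j$ not coming from an item gadget, a spine realized in some other way, or the isomorphism type $T$ being attained by an unintended gluing. A second, more technical difficulty is the routing: since the partition is not known in advance, each item gadget must be able to reach \emph{every} bin, yet geometry must guarantee it attaches to at most one; placing the bin gadgets consecutively around the circle and all item gadgets in a dedicated arc, with connector chords analogous to the memory chords of Theorem~\ref{dominationw1}, is the natural design, but checking that these connector chords are themselves always dominated and never contribute a vertex of the wrong degree to $C[S]$ is where most of the case analysis lies.
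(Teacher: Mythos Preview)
Your reduction is from the right source, but your encoding is upside-down relative to the paper's, and this is where the gap lies. You encode $m$ and $B$ in the tree (a caterpillar with $m$ spine vertices and $B$ leaves each) and try to encode the individual $a_i$'s in the circle graph via routable ``item gadgets''. The paper does the reverse: the tree $T$ is a \emph{spider} --- a root $r$ with $3m$ pendant paths of lengths $a_1,\dots,a_{3m}$ --- so the entire multiset $\{a_i\}$ lives in $T$, and the circle graph only has to supply $m$ identical ``bins'' of capacity $B$. This exploits precisely the hypothesis that $T$ is part of the input.

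With that choice the circle graph becomes almost trivial: a root chord $r$; then $mB$ parallel ``branch'' chords crossing only $r$; one private ``pendant'' chord hanging off each branch chord; and ``chain'' chords linking consecutive branch chords within (but not across) each of the $m$ blocks of $B$ consecutive branches. A dominating set isomorphic to $T$ must contain $r$ (the unique high-degree vertex), exactly $3m$ branch chords (the first vertices of the $3m$ legs), and chain chords extending each leg; the pendant chords force every position in every block to be covered by either a branch or a chain chord, so each block contributes three leg-lengths summing to $B$. No routing is needed: a block does not ``receive'' specific items, it simply realises \emph{some} three lengths summing to $B$, and the isomorphism with $T$ globally forces the multiset of all $3m$ realised lengths to equal $\{a_1,\dots,a_{3m}\}$.

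Your routing difficulty is therefore not a technicality to be cleaned up but the symptom of a harder design choice. Letting each item bundle reach every bin, remain all-or-nothing, contribute only leaf vertices to $C[S]$, and have its connector chords dominated without appearing in $S$, is genuinely awkward in a circle representation --- and you yourself flag both rigidity and connector-chord domination as unresolved. As written, your proposal is a specification of desired gadget behaviour rather than a construction; I would not claim it is impossible, but the paper's trick --- put the instance data in $T$, not in $C$ --- dissolves exactly the problem you are struggling with.
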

\begin{proof}
We present a reduction from the 3-\textsc{Partition} problem, which consists in
deciding whether a given multiset of $n=3m$ integers $I$ can be partitioned
into $m$ triples that all have the same sum $B$. The 3-\textsc{Partition}
problem is strongly \textsc{NP}-complete, and in addition, it remains
\textsc{NP}-complete even when every integer in $I$ is strictly between $B/4$
and $B/2$~\cite{GaJo79}. Let $I=\{a_1,\ldots,a_n\}$ be an instance of
3-\textsc{Partition}, in which we can assume that the $a_i$'s are between $B/4$
and $B/2$, and let $B = \sum_{i=1}^{n}a_i/m$ be the desired sum. Note that we
can also assume that $B$ is an integer, as otherwise $I$ is obviously a
\textsc{No}-instance.

We proceed to define a tree $T$ and to build a circle graph $G$ that has a
$\{T\}$-dominating set $S$ if and only if $I$ is a \textsc{Yes}-instance of
3-\textsc{Partition}. Given $I=\{a_1,\ldots,a_n\}$, let $T$ be the rooted tree
obtained from a root $r$ to which we attach a path with $a_i$ vertices, for
$i=1,\ldots,n$; see Fig~\ref{fig:treeHARD}(a) for an example with $n=9$, $m=3$,
and $B=5$. (In this figure, for simplicity not all the $a_i$'s are between
$B/4$ and $B/2$, but we assume that this fact is true in the proof.) Note that
$|V(T)|=mB+1$.

The circle graph $G$ is obtained as follows; see Fig~\ref{fig:treeHARD}(b) for
the construction corresponding to the instance of Fig~\ref{fig:treeHARD}(a): We
start with a chord $r$ that will correspond to the root of $T$. Now we add $mB$
parallel chords $g_1,\ldots,g_{mB}$ intersecting only with $r$. These chords
are called \emph{branch} chords; cf. the green chords in
Fig~\ref{fig:treeHARD}(b). We can assume that the endpoints of the branch
chords are ordered clockwise in the circle. For $i=1,\ldots,mB$, we add a chord
$b_i$ incident only with $g_i$. These chords are called \emph{pendant} chords;
cf. the blue chords in Fig~\ref{fig:treeHARD}(b), where for better visibility
these chords have been depicted outside the circle. Finally, for $i \in
\{1,2,\ldots,mB\} \setminus \{B,2B,\ldots,mB\}$, we add a chord $r_i$ whose
first endpoint is exactly after the first endpoint of $b_i$ (in the
anticlockwise order, starting from any of the endpoints of the root $r$), and
whose second endpoint is exactly before the second endpoint of $b_{i+1}$. These
chords are called \emph{chain} chords; cf. the red chords in
Fig~\ref{fig:treeHARD}(b). Note that $r_i$ is adjacent to $g_i,g_{i+1},b_{i}$,
and $b_{i+1}$. This completes the construction of the circle graph $G$. Each
one of the $m$ connected components that remain in $G$ after the removal of $r$
and the
parallel chords is called a \emph{block}. 

\begin{figure}[bt]
\flushleft \vspace{-.3cm}
\includegraphics[width=1.10\textwidth]{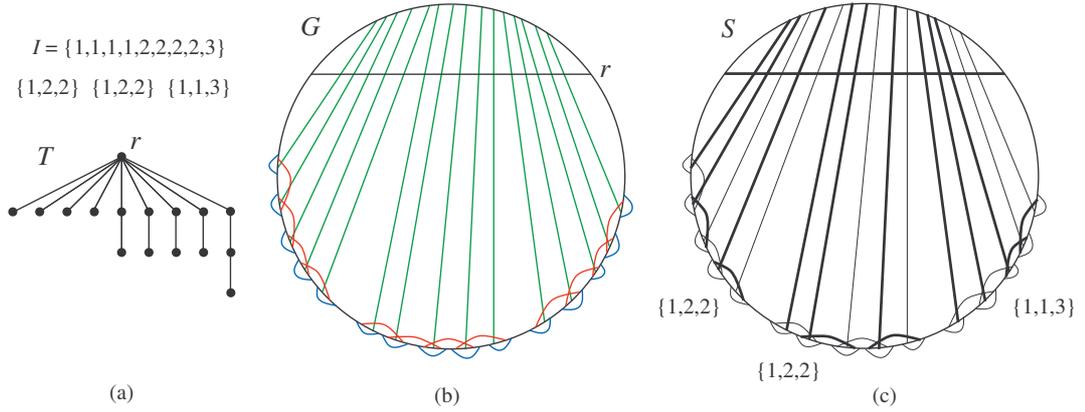}
\vspace{-.2cm} \caption{Reduction in the proof of Theorem~\ref{thm:treeHARD}.
(a) Instance $I$ of 3-\textsc{Partition}, with $n=9$, $m=3$, and $B=5$,
together with the associated tree $T$. (b). Circle graph $G$ built from $I$.
(c) The thick chords define a $\{T\}$-dominating set $S$ in
$G$.\label{fig:treeHARD}}\vspace{-.5cm}
\end{figure}

Let first $I$ be a \textsc{Yes}-instance of 3-\textsc{Partition}, and we
proceed to define a $\{T\}$-dominating set $S$ in $G$. For $1 \leq j \leq m$,
let $B_j=\{a_1^j,a_2^j,a_{3}^j\}$ be the $j$-th triple of the 3-partition of
$I$; in the instance of Fig~\ref{fig:treeHARD}(a), we have $B_1=\{1,2,2\},
B_2=\{1,2,2\}, B_3=\{1,1,3\}$. We include the chord $r$ in $S$, plus the
following chords for each $j \in \{1,,\ldots,m\}$: For $i \in \{1,2,3\}$, we
add to $S$ the branch chord $g_{(j-1)B + \sum_{k=1}^{i-1}a_k^j+1}$ plus, if
$a_i^j \geq 2$, the chain chords $r_{(j-1)B + \sum_{k=1}^{i-1}a_k^j+\ell}$ for
$\ell \in \{1,\ldots,a_{i}^{j}-1\}$; cf. the thick chords in
Fig~\ref{fig:treeHARD}(c). It can be easily checked that $S$ is a
$\{T\}$-dominating set of $G$.

Conversely, let $S$ be a $\{T\}$-dominating set $S$ in $G$, and note that we
can assume that the root of $T$ has arbitrarily big degree. As the vertex of
$G$ corresponding to the chord $r$ is the only vertex of $G$ of degree more
than 6,
necessarily $r$ belongs to $S$, and corresponds to the root of $T$. 

We claim that $S$ contains no pendant chord. Indeed, by construction of $G$,
exactly $n$ of the branch chords are in $S$, which dominate exactly $n$ pendant
chords. As $G[S \setminus \{r\}]$ consists of $n$ disjoint paths, each attached
to $r$ through a branch chord, the total number of chords in these paths which
are not branch chords is $mB-n$. These $mB-n$ pendant or chain chords must
dominate the pendant chords that are not dominated by branch chords, which are
also $mB -n$ many. Assume that a pendant chord $b$ belongs to $S$. Since $T$ is
a tree, there must exist a path $P$ in $S$ between $b$ and one of the branch
chords, say $g$. Assume that $P$ contains $p$ chords, including $b$ but not
$g$. It is clear that $b$ is the only pendant chord contained in $P$, as
otherwise $P$ would have a cycle. Therefore, $P$ has $p$ chords and dominates
exactly $p-1$ pendant chords that are not dominated by branch chords, which
contradicts the fact that $mB-n$ pendant or chain chords must dominate the
$mB-n$ pendant chords that are not dominated by branch chords. Hence, $S$
contains no pendant chord, so $S$ contains exactly $mB-n$ chain chords.

Since $T$ is a tree, each path in $S$ made of consecutive chain chords
intersects exactly one branch chord. As the $a_i$'s are strictly between $B/4$
and $B/2$, each block has exactly 3 branch chords in $S$. The fact that chain
chords are missing between consecutive blocks assures the existence of a
3-partition of $I$. More precisely, the restriction of $S$ to each block
defines the integers belonging to each triple of the 3-partition of $I$ as
follows. For a branch chord $g_i \in S$, let $P_i$ be the path in $S$ hanging
from $g_i$, which consists only of chain chords. Then, for each branch chord
$g_i \in S$, the corresponding integer is defined by the number of vertices in
$P_i$ plus one. By the above discussion, these $m$ triples define a 3-partition
of $I$. The theorem follows.
\end{proof}

To conclude this section, it is worth noting here that
\textsc{$\{T\}$-Dominating Set} is $W[2]$-hard in general graphs. This can be
proved by an easy reduction from \textsc{Set Cover} parameterized by the number
of sets, which is $W[2]$-hard~\cite{PaMo81}. Indeed, let $\mathcal{C}$ be a
collection of subsets of a set $S$, and the question is whether there exist at
most $k$ subsets in $\mathcal{C}$ whose union contains all elements of $S$. We
construct a graph $G$ as follows. First, we build a bipartite graph $(A \cup B,
E)$, where there is a vertex in $A$ (resp. $B$) for each subset in
$\mathcal{C}$ (resp. element in $S$), and there is an edge in $E$ between a
vertex in $A$ and a vertex in $B$ if the corresponding subset contains the
corresponding element. We add a new vertex $v$, which we join to all the
vertices in $A$, and $k+1$ new vertices joined only to $v$. It is then clear
that $G$ has a dominating set isomorphic to a star with exactly $k$ leaves if
and only if there is a collection of at most $k$ subsets in $\mathcal{C}$ whose
union contains all elements of $S$.


%
%



\section{Polynomial and FPT algorithms}
\label{sec:algo}

In this section we provide polynomial and FPT algorithms for finding dominating
sets in a circle graph which are isomorphic to trees. Namely, in
Section~\ref{sec:algoTrees} we give a polynomial-time algorithm to find a
dominating set isomorphic to {\sl some} tree. This algorithm contains the main
ideas from which the other algorithms in this section are inspired. In
Section~\ref{sec:FPTalgoTrees} we modify the algorithm to find a dominating set
isomorphic to a {\sl given} tree $T$ in \textsc{FPT} time, the parameter being
the size of $T$. By carefully analyzing its running time, we prove that this
\textsc{FPT} algorithm runs in {\sl subexponential} time. It follows from this
analysis that if the given tree $T$ has bounded degree (in particular, if it is
a path), then the problem of find a dominating set isomorphic to $T$ can be
solved in polynomial time.





\subsection{Polynomial algorithm for trees}
\label{sec:algoTrees}

Note that, in contrast with Theorem~\ref{treepoly} below,
Theorem~\ref{th:acyclic} in Section~\ref{sec:hardness2} states that, if
$\mathcal{F}$ is the set of all forests, then \textsc{$\mathcal{F}$-Dominating
Set} is $W[1]$-hard in circle graphs. 
This is one of the interesting examples where the fact of imposing connectivity
constraints in a given problem makes it computationally easier, while it is
usually not the case (see for instance~\cite{APP+08,RST10}).

\begin{theorem}\label{treepoly}
Let $\mathcal{T}$ be the set of all trees. Then
$\mathcal{T}$-\textsc{Dominating Set} can be solved in polynomial time in
circle graphs. In other words, \textsc{Connected Acyclic Dominating Set} can be
solved in polynomial time in circle graphs.
\end{theorem}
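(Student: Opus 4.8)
The plan is to design a dynamic programming algorithm over the circle representation of the input graph $G$, exploiting the fact that a connected acyclic dominating set (i.e., a dominating set inducing a tree) has a very restricted structure when viewed as a set of chords. First I would fix a circle representation of $G$ with $2n$ endpoints on the circle (computable in polynomial time by the recognition algorithm of~\cite{GPTC11,Spi94}), and observe that if $S$ is a tree-dominating set, then the chords of $S$, together with the regions they cut the disk into, form a planar laminar-like structure: since $G[S]$ is a tree, the chords of $S$ can be ordered so that they form a ``caterpillar-like'' arrangement inside the disk. The key structural observation I would try to establish is that the union of the chords in $S$ partitions the interior of the circle into faces, and the arcs of the circle delimited by consecutive endpoints of $S$ (the ``gaps'') each need to be dominated; a chord not in $S$ is dominated iff it crosses some chord of $S$, which happens iff its two endpoints lie in arcs that are ``separated'' by a chord of $S$. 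This lets me reduce domination to a covering condition expressed purely in terms of which arcs between consecutive $S$-endpoints are ``seen'' by the chords of $S$.

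Next I would set up the DP. The natural approach is to process the circle endpoint by endpoint (or chord by chord along the tree structure of $S$) and guess, for each maximal arc of the circle not containing an endpoint of the partial solution, enough information to decide later whether the chords with an endpoint in that arc will end up dominated. Because $G[S]$ is a tree, one can root the tree at an arbitrarily chosen chord of $S$ and do the DP along a traversal of this tree; each chord of $S$ splits the disk into two sides, and on each side one recursively has a smaller instance. The state of the DP would record: the current chord being added to $S$, the interval of the circle currently ``under consideration,'' and a bounded amount of information about which boundary arcs still need to be dominated by chords added later. I would argue that because the structure is a tree (no cycles among the chords of $S$), the amount of information that must be remembered at each step is polynomially bounded — in particular, one does not need to remember an unbounded ``profile'' as one would for general connected/acyclic dominating sets, which is precisely why those two problems are $W[1]$-hard (Theorem~\ref{th:acyclic} and Corollary~\ref{connecteddom}) but the combination is tractable.

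The main obstacle I anticipate is controlling the interaction between the connectivity requirement and the domination requirement simultaneously: I need the selected chords to pairwise induce a connected acyclic graph (so the ``adjacency'' pattern among chosen chords must be exactly a tree) while also every non-selected chord must cross at least one selected chord. The delicate point is proving that ``connected + acyclic'' forces the chosen chords into a sufficiently rigid geometric arrangement (essentially a non-crossing-at-the-tree-level structure where crossings encode tree-edges) so that a polynomial-size DP table suffices; without acyclicity one gets too much freedom (cycles of crossings), and without connectivity the pieces are independent but unboundedly many. I would handle this by a careful case analysis of how a chord of $S$ can be ``attached'' to the rest of $S$: since $G[S]$ is a tree, each chord of $S$ other than the root crosses exactly the chords corresponding to its neighbors in $T$, and I would show this forces a laminar nesting of the ``intervals'' of the chords, so the DP can proceed over a noncrossing tree of intervals.

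Finally, I would verify correctness in both directions (any tree-dominating set yields an accepting computation of the DP, and any accepting computation yields a valid tree-dominating set) and bound the running time: the number of DP states is polynomial in $n$ because the state consists of a constant number of circle positions plus a polynomially bounded description of ``pending domination obligations,'' and there are polynomially many transitions, so the total running time is polynomial. As a byproduct, since the algorithm actually searches over all trees, it simultaneously proves the statement for $\mathcal{T}$-\textsc{Dominating Set}; restricting to a given tree $T$ (the topic of later sections) will require augmenting the state with information about $T$, which is where the FPT/subexponential analysis of Section~\ref{sec:FPTalgoTrees} will come in.
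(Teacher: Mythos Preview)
Your high-level instinct is right: a dynamic program over the circle representation, exploiting that a tree of chords decomposes the disk in a controlled way, is exactly what the paper does. But the proposal stops precisely where the real work begins, and the vague parts hide the one technical idea that makes the DP polynomial.

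The paper's DP state is not ``a chord plus an interval plus some pending domination obligations''; it is a \emph{region} $ab-cd$, meaning a pair of disjoint arcs $[a,c]$ and $[d,b]$ determined by four endpoints. For such a region the paper asks for a \emph{valid} forest: one whose chord endpoints all lie in $[a,c]\cup[d,b]$, that touches $a,b,c,d$, that already dominates every chord lying entirely inside $[a,c]$ or entirely inside $[d,b]$, and---crucially---that is \emph{split} by the region: each connected component of the forest has exactly one chord crossing from $[a,c]$ to $[d,b]$. This split condition is the missing idea in your sketch. It is what replaces your undefined ``bounded amount of information about pending obligations'': once the forest is split, a single crossing chord per component is the only interface with the outside, so the state is just four points and a bit (forest vs.\ tree), giving $O(n^4)$ states rather than something depending on an unbounded profile.

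The second missing piece is the pair of combination rules. The paper has exactly two transitions: \textbf{T1} merges two valid forests for nested regions $ab-cd$ and $ef-gh$ into a valid forest for $ab-gh$, provided the ``gap'' arcs $[c,e]$ and $[f,d]$ contain no chord; \textbf{T2} takes a valid forest (or two, one on each side) and adds a single new chord $uv$ that threads through all components, collapsing the forest into a valid \emph{tree} for a new region. Correctness of these two rules (Claims~8 and~9) and completeness of the DP (Claims~11 and~12) are where the acyclicity and connectivity hypotheses are actually used: connectivity is what makes Claim~6 (a valid tree for $ab-cd$ dominates all chords with both endpoints in $[a,c]\cup[d,b]$, including those going across) go through, and acyclicity is what forces the unique-crossing-chord-per-component structure that lets \textbf{T2} invert cleanly in the completeness proof. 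Your sentence ``I would show this forces a laminar nesting of the intervals'' is exactly this step, but as written it is an assertion, not an argument; the paper's split condition is the precise formulation you are looking for.

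Finally, a minor correction: the chord arrangement of a tree dominating set need not be ``caterpillar-like'' in any standard sense---arbitrary trees occur---so that intuition should be dropped.
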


%
%
%

\begin{proof}
Let $C$ be a circle graph on $n$ vertices and let $\mathcal{C}$ be an arbitrary
circle
representation of $C$. 
We denote by $\mathcal{P}$ the set of intersections of the circle and the
chords in this representation. The elements of $\mathcal{P}$ are called
\emph{points}. Without loss of generality, we can assume that only one chord
intersects a given point. Given two points $a,b \in \mathcal{P}$, the
\emph{interval $[a,b]$} is the interval from $a$ to $b$ in the anticlockwise
order. Given four (non-necessarily distinct) points $a,b,c,d \in \mathcal{P}$,
with $a \leq c \leq d \leq b$, by the \emph{region $ab-cd$} we mean the union
of the two intervals $[a,c]$ and $[d,b]$. Note that these two intervals can be
obtained by ``subtracting'' the interval $[c,d]$ from the interval $[a,b]$;
this is why we use the notation $ab-cd$.

In the following, by \emph{size} of a set of chords, we mean the number of
chords in it, i.e., the number of vertices of $C$ in this set. We say that a
forest $F$ of $C$ \emph{spans} a region $ab-cd$ if each of $a,b,c$, and $d$ is
an endpoint of some chord in $F$, and each endpoint of a chord of $F$ is either
in $[a,c]$ or in $[d,b]$. A forest $F$ is \emph{split} by a region $ab-cd$  if
for each connected component of $F$ there is exactly one chord with one
endpoint in $[a,c]$ and one endpoint in $[d,b]$. Given a region $ab-cd$, a
forest $F$ is $(ab-cd)$-\emph{dominating} if all the chords of $C$ with both
endpoints either in the interval $[a,c]$ or in the interval $[d,b]$ are
dominated by $F$. A forest is \emph{valid} for a region $ab-cd$ if it spans
$ab-cd$, is split by $ab-cd$, and is $(ab-cd)$-dominating. 




Note that an $(ab-cd)$-dominating forest with several connected components
might not dominate some chord going from $[a,c]$ to $[d,b]$. This is not the
case if $F$ is connected, as stated in the following claim.

\begin{claimN}\label{extraforTree}
Let $T$ be a valid tree for a region $ab-cd$. Then all the chords of $C$ with
both endpoints in $[a,c] \cup [d,b]$ are dominated by $T$.
\end{claimN}
\begin{proof}
All the chords with both endpoints either in $[a,c]$ or in $[d,b]$ are
dominated by $T$, since $T$ is $(ab-cd)$-dominating. Hence we just have to
prove that the chords with one endpoint in $[a,c]$ and one in $[d,b]$ are
dominated by $T$. Since $T$ spans $ab-cd$, there are in $T$ a chord $\gamma$
with endpoint $a$ and a chord $\gamma'$ with endpoint $c$. Since $T$ is split
by $ab-cd$, there is a unique chord $uv$ in $T$ with one endpoint in $[a,c]$
and one in $[d,b]$. In $T \backslash \{uv \}$, $\gamma$ and $\gamma'$ are in
the same connected component. Indeed, otherwise their connected components span
two disjoint intervals of $[a,c]$. But $uv$ is the unique chord of $T$ with one
endpoint in $[a,c]$ and one in $[d,b]$, thus $uv$ cannot connect these
components. So, if $T$ is a tree, $\gamma$ and $\gamma'$ are in the same
connected component.

Thus for each point $p$ in $[a,c]$, there is a chord $ef$ of the connected
component of $\gamma$ and $\gamma'$ such that $a \leq e \leq p \leq f \leq c$.
Therefore, the chords with one endpoint in $[a,c]$ and one endpoint in $[d,b]$
are dominated.
\end{proof}

We now state two properties that will be useful in the algorithm. Their
correctness is proved below.

\begin{itemize}
\item[\textbf{T1}]\label{r1}
Let $F_1$ and $F_2$ be two valid forests for two regions $ab-cd$ and $ef-gh$,
respectively, such that $a \leq c \leq e \leq g \leq h \leq f \leq d \leq b$.
If there is no chord with both endpoints either in $[c,e]$ or in $[f,d]$, then
$F_1 \cup F_2$ is valid for $ab-gh$ (see Fig. \ref{treedom}).


\item[\textbf{T2}]\label{r2} Let $F_1$ and $F_2$ be two
valid forests for two regions $ab-cd$ and $ef-gh$, respectively ($F_2$ being
possibly empty), and let $uv$ be a chord such that $u \leq a \leq c \leq e \leq
g \leq v \leq h \leq f \leq d \leq b$, and such that there is no chord with
both endpoints either in $[u,a]$, or in $[g,v]$, or in $[v,h]$, or in $[b,u]$.
Then $F_1 \cup F_2 \cup \{uv\}$ is a tree which is valid for $df-ce$. When
$F_2$ is empty, we consider that $e,f,g,h$ correspond to the point $v$. (see
Fig. \ref{treedom}).
\end{itemize}

\begin{figure}[t]
\center \vspace{-.4cm}
\input{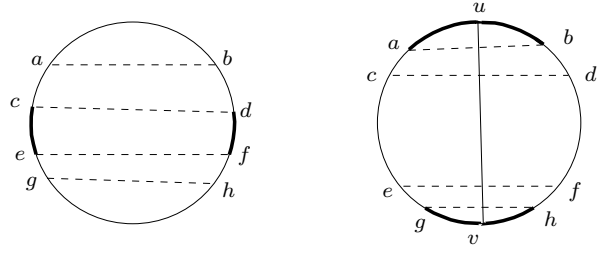} \caption{On the left (resp. right),
regions corresponding to Property~\textbf{T1} (resp. Property~\textbf{T2}).
Full lines correspond to real chords of $C$, dashed lines correspond to the
limit of regions. Bold intervals correspond to intervals with no chord of $C$
with both endpoints in the interval.} \label{treedom}
\end{figure}

Roughly speaking, the intuitive idea behind this two properties is to reduce
the length of the circle in which we still have to do some computation (that
is, outside the valid regions), which will be helpful in the dynamic
programming routine. Again, the proof is structured along a series of claims.
Before verifying the correctness of Properties~\textbf{T1} and~\textbf{T2}, let
us first state a useful general fact.


\begin{claimN}\label{oneeachside}
Let $ab-cd$ be a region and let $F$ be a valid forest for $ab-cd$. The chords
with one endpoint in $[c,d]$ and one endpoint in $[d,c]$ are dominated by $F$.
\end{claimN}
\begin{proof}
Let us now consider a chord $\gamma$ with one endpoint $\alpha$ in $[c,d]$ and
one endpoint $\beta$ in $[d,c]$. First assume that $\beta$ is in $[b,a]$. Since
$F$ is split by $ab-cd$, there is a chord of $F$ with one endpoint in $[a,c]$
and one in $[d,b]$, and such a chord dominates $\gamma$. Therefore, by
symmetry, we can assume that $\beta$ is in $[a,c]$. Since $F$ spans $ab-cd$,
there is a chord in $F$ with endpoint $c$. Since $F$ is split by $ab-cd$, there
is a chord $\omega=uv$ of $F$, in the same connected component as the chord
with endpoint $c$, with one endpoint in $[a,c]$ and one endpoint in $[d,b]$. If
$\beta \leq u \leq \alpha \leq v$, then the chord $\gamma$ is dominated by $F$.
Thus we can assume that $u \leq \beta < \alpha$. And note that by assumption,
$\beta \leq c \leq \alpha$. But in $F$, the chord with endpoint $c$ is
connected to the chord $\omega$, thus there is a chord $wz$ of $F$ such that $w
\leq \beta \leq z \leq \alpha$, and therefore the chord $\gamma$ is dominated
by $F$, which achieves the proof of the claim.\end{proof}

Note that the proof of Claim~\ref{oneeachside} is symmetric, and then the same
result is still true for the intervals $[a,b]$ and $[b,a]$. Note also that the
same result holds without the asumption that $F$ is $(ab-cd)$-dominating.

\begin{claimN}
\label{cl:T1correct} Property~\textbf{\emph{T1}} is correct.
\end{claimN}
\begin{proof}
Let us prove that $F_1 \cup F_2$ is a forest and that it is valid for the
region $ab-gh$. For an illustration refer to Fig.~\ref{treedom}. Since $F_1$
and $F_2$ span $ab-cd$ and $ef-gh$ respectively, all the endpoints of the
chords of $F_1$ are in $[a,c] \cup [d,b]$, and those of $F_2$ are in $[e,g]
\cup [h,f]$. Thus the order $a \leq c \leq e \leq g \leq h \leq f \leq d \leq
b$ ensures that a chord of $F_1$ cannot cross of chord of $F_2$. Therefore,
$F_1 \cup F_2$ is still a forest and the connected components of the union are
precisely the connected components of $F_1$ and the connected components of
$F_2$.

Since $F_1$ spans $ab-cd$, there is a chord of $F_1$ with endpoint $a$ and one
chord with endpoint $b$, and the same holds for $F_2$ and $g,h$. Then $F_1 \cup
F_2$ spans the region $ab-gh$.

Since $F_1$ is split by $ab-cd$, there is exactly one chord per connected
component between $[a,c]$ and $[d,b]$, thus also between $[a,g]$ and $[h,b]$.
The same holds for $F_2$. Thus each connected component of $F_1 \cup F_2$ has
exactly one chord with one endpoint in $[a,g]$ and the other one in $[h,b]$.
So $F_1 \cup F_2$ is split by $ab-gh$.

Let us now prove that $F_1 \cup F_2$ is $(ab-gh)$-dominating. Let us verify
that all the chords in the interval $[a,g]$ are dominated by $F_1 \cup F_2$. By
symmetry, the same will hold for the interval $[h,b]$. All the chords with both
endpoints in the interval $[a,c]$ are dominated by $F_1$, and those with both
endpoints in the interval $[e,g]$ are dominated by $F_2$. By assumption, there
is no chord in the interval $[c,e]$. The chords with one endpoint in $[c,d]$
and one endpoint in $[d,c]$ are dominated by $F_1$ by Claim~\ref{oneeachside},
and those with one endpoint in $[e,f]$ and one endpoint in $[f,e]$ are
dominated by $F_2$. Thus all the chords with both endpoints in $[a,c]$ are
dominated by $F_1 \cup F_2$, which ensures that $F_1 \cup F_2$ is
$(ab-gh)$-dominating.

Therefore,  $F_1 \cup F_2$ is valid for the region $ab-gh$.
\end{proof}

\begin{claimN}
\label{cl:T2correct} Property~\textbf{\emph{T2}} is correct.
\end{claimN}
\begin{proof}
Let $F_1$  and $F_2$ be two valid forests for $ab-cd$ and for $ef-gh$,
respectively ($F_2$ being possibly empty), and let $uv$ be a chord with
endpoints $u$ and $v$, such that $u \leq a \leq c \leq e \leq g \leq v \leq h
\leq f \leq d \leq b$ and such that there is no chord with both endpoints in
either $[u,a]$, or $[g,v]$, or $[v,h]$, or $[b,u]$. For an illustration, refer
also to Fig.~\ref{treedom}.

First note that $T=F_1 \cup F_2 \cup \{uv\}$ is a tree. Indeed, as in
Claim~\ref{cl:T1correct}, one can prove that $F_1 \cup F_2$ is a forest with
exactly one chord with one endpoint in $[a,g]$ and one endpoint in $[h,b]$ per
connected component. Thus, the addition of $uv$ ensures that $T$ is a tree.

Since $F_1$ and $F_2$ spans $ab-cd$ and $ef-gh$ respectively, $T$ spans
$df-ce$. Indeed, there are chords intersecting $d$ and $c$ in $F_1$, chords
intersecting $e$ and $f$ in $F_2$, and all the chords are strictly inside
$[d,c] \cup [e,f]$. Note that when the forest $F_2$ is empty, there is a chord
intersecting $v$, and thus the tree $T$ spans $df-ce$.

The tree $T$ spans $df-ce$, since there is exactly one chord with one endpoint
in $[d,c]$ and one endpoint in $[e,f]$, which is precisely the chord $uv$.

Let us prove that $T$ is $(df-ce)$-dominating.
First note that the chords with one endpoint in $[u,v]$ and one endpoint in
$[v,u]$ are dominated by $uv$. By symmetry, we just have to prove that the
chords with both endpoints in $[u,v]$ are dominated by $T$. By symmetry again,
we just have to prove that the chords with both endpoints in $[u,c]$ are
dominated. There is no chord with both endpoints in $[u,a]$, the chords with
both endpoints in $[a,c]$ are dominated by $F_1$, since $F_1$ is
$(ab-cd)$-dominating. By Claim~\ref{oneeachside}, the chords with one endpoint
in $[a,b]$ and one endpoint in $[b,a]$ are dominated by $F_1$, thus the chords
with one endpoint in $[u,a]$ and one in $[a,c]$ are dominated.

Therefore, we conclude that $T$ is a valid tree for $df-ce$.
\end{proof}

For a region $ab-cd$, we denote by $v_{ab,cd}^f$ (resp. $v_{ab,cd}^t$) the
least integer $l$ for which there is a valid forest (resp. tree) of size $l$
for $ab-cd$. If there is no valid forest (resp. tree) for $ab-cd$, we set
$v_{ab,cd}^f = + \infty$ (resp. $v_{ab,cd}^t = + \infty$). Let us now describe
our algorithm based on dynamic programming. With each region $ab-cd$, we
associate two integers $v_{ab,cd}^1$ and $v_{ab,cd}^2$. Algorithm~\ref{algTree}
below calculates these two values for each region. We next prove that
$v_{ab,cd}^1=v_{ab,cd}^f$ and $v_{ab,cd}^2=v_{ab,cd}^t$, and that
Algorithm~\ref{algTree} computes the result in polynomial time.

\begin{algorithm}[htb]
\caption{Dynamic programming for computing a dominating tree} \label{algTree}
\begin{algorithmic} 

\medskip

\STATE{\textbf{for} each region $ab-cd$ $\ $\textbf{do}$\ $ $v_{ab,cd}^1
\leftarrow \infty$; \ $v_{ab,cd}^2 \leftarrow \infty$}

\STATE{\textbf{for} each chord $ab$ of the circle graph \ \textbf{do} \
$v_{ab,ab}^1 \leftarrow 1$; \ $v_{ab,ab}^2 \leftarrow 1$}

\FOR {$j=2$ to $n$}

\IF{there are two regions $ab-cd$ and $ef-gh$ such that $v_{ab,cd}^1=j_1$ and
$v_{ef,gh}^1=j_2$ with $j_1+j_2=j$ satisfying Property~\textbf{T1}, with
$v_{ab,gh}^1=+ \infty$}

\STATE{$v_{ab,gh}^1 \leftarrow j$}
     \ENDIF

\IF{there is a region $ab-cd$ and a chord $uv$ such that $v_{ab,cd}^1=j-1$
satisfying Property~\textbf{T2} with an empty second forest}

\IF {$v_{dv,cv}^1=+ \infty$} \STATE{$v_{dv,cv}^1 \leftarrow j$}
     \ENDIF
\IF{$v_{dv,cv}^2=+ \infty$} \STATE{$v_{dv,cv}^2 \leftarrow j$} \ENDIF
     \ENDIF

\IF{there are two regions $ab-cd$ and $ef-gh$ and a chord $uv$ such that
$v_{ab,cd}^1=j_1$ and $v_{ef,gh}^1=j_2$ with $j_1+j_2=j-1$ satisfying
Property~\textbf{T2}}

\IF {$v_{df,ce}^1=+ \infty$} \STATE{$v_{df,ce}^1 \leftarrow j$}
     \ENDIF
\IF{$v_{df,ce}^2=+ \infty$} \STATE{$v_{df,ce}^2 \leftarrow j$} \ENDIF
     \ENDIF\ENDFOR
\end{algorithmic}
\end{algorithm}

\begin{claimN}
\label{cl:inizialitation} For any region $ab-cd$, $v_{ab,cd}^1= 1$ (resp.
$v_{ab,cd}^2= 1$) if and only if $v_{ab,cd}^f = 1$ (resp. $v_{ab,cd}^t = 1$).
\end{claimN}
\begin{proof}
Let $ab-cd$ be a region such that $v^f_{ab,cd}=1$. Therefore, there is a set of
chords of size one which is valid for $ab-cd$. Let $\omega$ be this chord.
Since $\omega$ spans $a,b,c$ and $b$, $\omega$ has endpoints $a,b,c,d$. This
implies that $a=c$ and $b=d$, i.e., $\omega$ is precisely the chord $ab$. If
$v^f_{ab,cd}=1$, then $a=c$, $b=d$, and the chord $ab$ exists, which
corresponds exactly to the initialization of the algorithm. Conversely, it is
clear that by definition the chord $ab$ is valid for the region $ab-ab$. Thus,
$v_{ab,cd}^1= 1$ if and only if $v_{ab,cd}^f= 1$.
The same result holds for trees, since a forest of size one is a tree.
\end{proof}


\begin{claimN}
\label{cl:oneSenseOfInequality} For any region $ab-cd$, $v_{ab,cd}^f \leq
v_{ab,cd}^1 $ (resp. $v_{ab,cd}^t \leq v_{ab,cd}^2$).
\end{claimN}
\begin{proof}
The claim is true for the initialization, since if $v_{ab,cd}^1= 1$ then
$v_{ab,cd}^f =1$. By induction it is still true for all integers $k$, since
Properties~\textbf{T1} and~\textbf{T2} are correct, and when a value is
affected in the dynamic programming of Algorithm~\ref{algTree}, one of the two
properties is applied.
\end{proof}

\begin{claimN}
\label{cl:ineqTrees} For any region $ab-cd$, $v_{ab,cd}^t \geq v_{ab,cd}^2$ and
$v_{ab,cd}^f \geq v_{ab,cd}^1$.
\end{claimN}
\begin{proof}
Let us prove it by induction on $j$. Claim~\ref{cl:inizialitation} ensures that
the result is true for $j=1$. Assume that for all $j<k$, if $v_{ab,cd}^f=j$
then $v_{ac,bd}^1 \leq j$ and that if $v_{ac,bd}^t = j$ then $v_{ac,bd}^2 \leq
j$.

Let us first prove that the induction step holds for trees. We now prove that
if $v_{ab,cd}^t=j$, then $v_{ab,cd}^2 \leq j$. Let $T$ be a valid tree of size
$j$ for the region $ab-cd$. Since $T$ spans $ab-cd$, there is exactly one chord
$uv$ with one endpoint in $[a,c]$ and one endpoint in $[d,b]$. Let $F_1$ be the
restriction of $T$ to the chords with both endpoints in $ [a,c]$, and let $F_2$
be the restriction of $T$ to the chords with both endpoints in $[d,b]$. Note
that $T= F_1 \cup F_2 \cup \{uv\}$. Let $e,f$ (resp. $g,h$) be the points of
the circle graph intersected by $F_1$ (resp. $F_2$) such that $a \leq e \leq u
\leq f \leq c$ (resp. $d \leq h \leq v \leq g \leq b$), and $e,f$ (resp. $g,h$)
are as near as possible from $u$ (resp. $v$) (see Fig. \ref{progdyna1} for an
example). Let us denote by $j_1$ (resp. $j_2$) the size of $F_1$ (resp. $F_2$).
Note that $j_1+j_2=j-1$.

\begin{figure}
\center \input{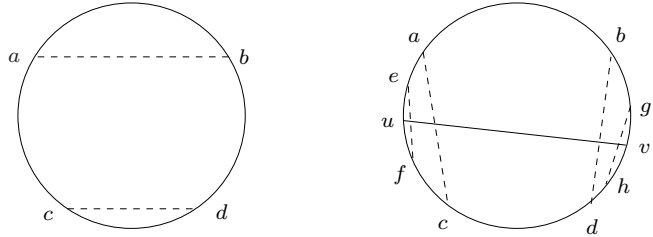} \caption{On the left the original tree, and
on the right the partition of the tree into two forests and a chord $uv$.}
\label{progdyna1}
\end{figure}

Let us prove that $F_1$ is valid for $ac-ef$, that $F_2$ is valid for $db-hg$,
and that Property~\textbf{T2} can be applied to $F_1$, $F_2$, and the chord
$uv$. By symmetry, we just have to prove that $F_1$ is valid. There are chords
intersecting $e$, $f$ by definition of $e,f$, and chords intersecting $a,c$
since $T$ spans $ab-cd$. The endpoints of the chords are in $[a,e] \cup [f,c]$,
since $T$ spans $ab-cd$ and $e,f$ are the nearest points from $u$ which are in
$T$. Thus $F_1$ spans $ac-ef$.

If there is a connected component of $F_1$ with no chord from $[a,e]$ to
$[f,c]$, then $F_1 \cup F_2 \cup \{uv\}$ cannot be a tree, since it would not
be connected. If a connected component of $F_1$ has two chords from $[a,e]$ to
$[f,c]$, then $F_1 \cup \{u,v\}$ has a cycle, which contradicts the fact that
$T$ is a tree. Thus $F_1$ spans $ac-ef$.

Let us prove that $F_1$ is $(ac-ef)$-dominating. Indeed, if there is a chord
with both endpoints in $[a,e]$ which is not dominated by $F_1$, it cannot be
dominated by $F_2$ and $uv$, since none of their endpoints is in this interval.
Thus $T$ is not valid. Hence all the chords in the interval $[a,e]$, and by
symmetry also in the interval $[f,c]$, are dominated by $F_1$. Therefore $F_1$
is valid. By induction hypothesis, since the size of $F_1$ is at most $j_1$, we
have $v_{ac,ef}^2 \leq j_1$, and the same holds for $F_2$.

Since $T$ is valid, one can note that there is no chord with both endpoint
either in $[e,u]$, or $[u,f]$, or $[h,v]$, or $[v,g]$. Thus
Property~\textbf{T2} can be safely applied and then $v_{ab,cd}^2 \leq j$, as we
wanted to prove.

\begin{figure}
\center \vspace{-.35cm}
\input{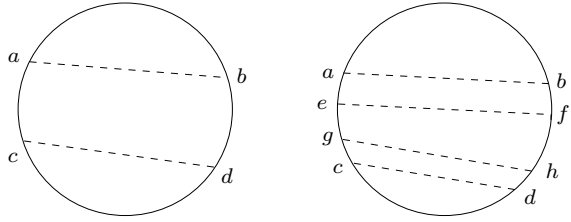} \caption{On the left the
original forest, and on the right the partition of the forest into the two
forests.} \label{progdyna2}
\end{figure}

Let us now prove that the induction step also holds for forests. Let us now
consider a forest $F$ for the region $ab-cd$. If the forest has exactly one
connected component, that is, if it is a tree, then the inequality holds by the
first part of the induction.

Assume now that $F$ has at least two connected components. Let us now prove
that if $v_{ab,cd}^f=j$ then $v_{ab,cd}^1 \leq j$. Let $ab-cd$ be a region with
$v_{ac,bd}^f=k$. This means that there exists a forest $F$ with one endpoint in
$a,b,c,d$, since $F$ spans $ab-cd$. Since $v_{ab,cd}^t \neq j$, $F$ has at
least two connected components. Note that the case when $v_{ab,cd}^t = j$ is
treated just above.

Since $F$ spans $ab-cd$, all the endpoints of $F$ are in $[a,c] \cup [d,c]$.
Let $F_1$ be the connected component of the chord with endpoint $a$. The point
$e$ (resp. $f$) is the point of $[a,c]$ (resp. $[d,b]$) with an endpoint in
$F_1$, and such that there is no endpoint of $F_1$ after $e$ (resp. before $f$)
in $[a,c]$ (resp. $[d,b]$). Let $g$ (resp. $h$) be the first endpoint of $F$
after $e$ in $[a,c]$ (resp. before $f$ in $[d,b]$).

Let us denote by $F_2$ the set $F \setminus F_1$ (see Fig. \ref{progdyna2} for
an example). Let us prove that $F_1$ and $F_2$ are valid for $ab-ef$ and for
$gh-cd$, respectively. Since $F$ is $(ab-cd)$-dominating, all the chords with
both endpoints either in $[a,e]$ or in $[f,b]$ (resp. $[g,c]$ or $[h,d]$) are
dominated by $F$, thus by $F_1$ (resp. $F_2$). Therefore $F_1$ (resp. $F_2$) is
$(ab-ef)$-dominating (resp. $(gh-cd)$-dominating). Thus, by induction
hypothesis we have $v_{ac,ef}^1=v_{ac,ef}^f$ and $v_{gh,cd}^1=v_{gh,cd}^f$. And
since Property~\textbf{T1} can be applied for $F_1 \cup F_2$, by the safeness
of Property~\textbf{T1}, in Algorithm~\ref{algTree} we have $v_{ab,cd}^1 \leq
v_{ab,cd}^f$.
\end{proof}

Claims~\ref{cl:oneSenseOfInequality} and~\ref{cl:ineqTrees} together ensure
that $v_{ab,cd}^f = v_{ab,cd}^1$ and that $v_{ab,cd}^t = v_{ab,cd}^2$. Hence,
by dynamic programming all the regions of a given size can be found in
polynomial time. Let us now explain how we can verify if there is a dominating
set isomorphic to some tree of a given size $k$. This in particular will prove
Theorem~\ref{treepoly}.

\begin{claimN}
\label{claim:lastClaimTrees} Let $k$ be a positive integer. There is a
dominating tree of size at most $k$ in $C$ if and only if there is a region
$ab-cd$ such that $v^t_{ab,cd}\leq k$ and such that there is no chord strictly
contained in $[b,a]$ nor in $[c,d]$.
\end{claimN}
\begin{proof}
Assume that there is a region $ab-cd$ with $v^t_{ab,cd}\leq k$ and no chord
strictly contained in $[b,a]$ nor in $[c,d]$. Then by Claim~\ref{oneeachside},
all the chords with one endpoint in $[b,a]$ and one endpoint in $[a,b]$ are
dominated, and the same holds for the couple $c,d$. All the chords with both
endpoints in $[a,c] \cup [d,b]$ are dominated by Claim~\ref{extraforTree}, as
$T$ is valid for $ab-cd$. Since by assumption there is no chord strictly
contained in $[b,a]$ and in $[c,d]$, all the chords of the circle graph $C$ are
dominated, as by assumption there is no chord in the other intervals. Thus, $T$
is a dominating set.

Conversely, let $T$ be a dominating tree of size $k$. Let $uv$ be a chord of
$T$ which disconnects $T$. Thus $T \backslash \{uv\}$ has at least two
connected components $F_1$ and $F_2$. Let $a,c$ be the two extremities of the
first one,  and let $b,d$ be the extremities of the other ones (see
Fig.~\ref{progdyna1} for an illustration). Let us prove that $v_{ab,cd}^t \leq
k$. Indeed, $T$ is a tree by assumption, by definition it spans $a,b,c,d$, it
spans $ab-cd$ since the chord $uv$ is the unique chord from $[a,c]$ to $[d,b]$,
and it is $(ab-cd)$-dominating since $T$ is a dominating tree. In addition,
since $T$ is a dominating tree, there is no chord with both endpoints either in
the interval $[b,a]$ or in $[c,d]$, as otherwise such a chord would not be
dominated by $T$.
\end{proof}

By dynamic programming, Algorithm~\ref{algTree} computes in polynomial time the
regions for which there is a valid tree of any size from $1$ to $n$. Given a
region $ab-cd$ with $v_{ab,cd}^t \leq k$, we just have to verify that there are
no chords in the intervals $[b,a]$ and $[c,d]$, which can clearly be done in
polynomial time. One can easily check that Algorithm~\ref{algTree} runs in time
$\mathcal{O}(n^{10})$, but we did not make any attempt to improve its time
complexity. This completes the proof of Theorem~\ref{treepoly}.
\end{proof}

As Algorithm~\ref{algTree} computes the regions for which there is a valid tree
of any size from $1$ to $n$, it can be slightly modified to obtain the
following corollary.


\begin{corollary}
\label{cor:treeFixedSize} Let $\mathcal{T}_k$ be the set of all trees of size
exactly $k$. Then $\mathcal{T}_k$-\textsc{Dominating Set} can be solved in
polynomial time in circle graphs.
\end{corollary}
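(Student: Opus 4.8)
The plan is to show that Corollary~\ref{cor:treeFixedSize} follows from essentially the same machinery as Theorem~\ref{treepoly}, the only difference being that we now need to keep track of the exact sizes of valid trees rather than just the minimum size. First I would observe that Algorithm~\ref{algTree}, as stated, already records for every region $ab-cd$ the \emph{minimum} size of a valid tree (resp. forest); to obtain the new result I would modify it so that, instead of storing a single value $v_{ab,cd}^2$, we store a Boolean array indexed by $\ell \in \{1,\dots,n\}$, where the $\ell$-th entry indicates whether there is a valid tree of size \emph{exactly} $\ell$ for the region $ab-cd$ (and similarly for forests with $v_{ab,cd}^1$). The recurrences driving the updates are unchanged: Properties~\textbf{T1} and~\textbf{T2} combine valid forests/trees of sizes $j_1$ and $j_2$ into a valid object of size $j_1+j_2$ or $j_1+j_2+1$, so the exact size is a linear function of the sizes of the combined pieces, and the same correctness arguments (Claims~\ref{cl:T1correct}, \ref{cl:T2correct}, \ref{cl:inizialitation}, \ref{cl:oneSenseOfInequality}, \ref{cl:ineqTrees}) go through verbatim with ``there is a valid tree of size $\le j$'' replaced by ``there is a valid tree of size exactly $j$''.

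Next I would adapt Claim~\ref{claim:lastClaimTrees} to the exact-size setting: there is a dominating tree of size exactly $k$ in $C$ if and only if there is a region $ab-cd$ admitting a valid tree of size exactly $k$ such that no chord of $C$ is strictly contained in $[b,a]$ nor in $[c,d]$. The proof is identical to that of Claim~\ref{claim:lastClaimTrees}, since that argument never used that $k$ was an upper bound rather than the exact size---it only used validity of the tree $T$ for the region together with the absence of chords in the two complementary intervals. Finally, since $\mathcal{T}_k$ is the set of \emph{all} trees on exactly $k$ vertices, a dominating set isomorphic to some member of $\mathcal{T}_k$ is precisely a dominating tree of size exactly $k$; hence we answer \textsc{Yes} iff the modified algorithm finds such a region.

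For the running time, storing a length-$n$ Boolean vector per region multiplies the space by a factor of $n$, and each update step now ranges over the (at most $n$) possible target sizes, so the time blows up by at most a polynomial factor; the algorithm still runs in time $n^{\mathcal{O}(1)}$ (concretely $\mathcal{O}(n^{11})$ or so, with no attempt at optimization). I do not anticipate a genuine obstacle here---the main point is simply to check that every claim in the proof of Theorem~\ref{treepoly} is insensitive to the replacement of ``size at most'' by ``size exactly'', which it is because the dynamic program is additive in the sizes; the only mild subtlety is making sure the initialization (Claim~\ref{cl:inizialitation}) and the final extraction step (the analogue of Claim~\ref{claim:lastClaimTrees}) are phrased in terms of exact sizes, which is immediate.
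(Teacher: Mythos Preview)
Your proposal is correct and follows essentially the same approach as the paper: the paper's justification for Corollary~\ref{cor:treeFixedSize} is the single remark that Algorithm~\ref{algTree} already ``computes the regions for which there is a valid tree of any size from $1$ to $n$'' and can therefore be slightly modified, and your proposal simply spells out that slight modification (tracking exact sizes via Boolean arrays and adapting Claim~\ref{claim:lastClaimTrees} accordingly).
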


\subsection{FPT algorithm for a given tree}
\label{sec:FPTalgoTrees}

It turns out that when we seek a dominating set isomorphic to a given {\sl
fixed} tree $T$, the problem is \textsc{FPT} parameterized by $|V(T)|$. In
order to express the running time of our algorithm, and to prove that it is
{\sl subexponential} in $|V(T)|$, we need some definitions. Let $T$ be a tree,
and let us root $T$ at an arbitrary vertex $r$. Let $v$ be a vertex of $T$. We
denote by $T[v]$ the subtree of $T$ induced by $v$ and the descendants of $v$
in the rooted tree.  Let $v_1,\ldots,v_l$ be the children of $v$ in the tree
$T$ rooted at $r$. We define $F(v)$ as the forest $T[v_1] \cup T[v_2] \ldots
\cup T[v_l]$, which we consider as a multiset with elements $T[v_1], \ldots
T[v_l]$, where we consider two isomorphic trees $T[v_i]$ and $T[v_j]$ as the
same element.
Suppose that $F(v)$ contains exactly $s$ non-isomorphic trees $T_1,\ldots,T_s$,
and that for $1 \leq i \leq s$, there are exactly $d_i$ trees in $F(v)$ which
are isomorphic to $T_i$ (note that $\sum_{i=1}^{s}d_i = l$). We define the
following parameter, which corresponds to the number of non-isomorphic
sub(multi)sets of the multiset $F(v)$:
$$
\alpha_r^T(v)\ = \ \prod_{i=1}^{s} (d_i+1).
$$
Finally, we also define the following three parameters:
\begin{eqnarray*}
\alpha_r^T& = &\max_{v \in V(T)} \alpha_r^T(v)\\
\alpha^T & = & \max_{r {\scriptsize\mbox{ root of }}T} \alpha_r^T\\
\alpha^t & = & \max_{T: |V(T)|=t} \alpha^T.
\end{eqnarray*}

Let $t = |V(T)|$. Note that for any tree $T$, we easily have that $\alpha^T
\leq 2^t$, and that if $T$ has maximum degree at most $\Delta$, then it holds
that $\alpha^T \leq t \cdot 2^{\Delta -1}$ (by choosing $r$ to be a leaf of
$T$). In particular, if $T$ is a path on $t$ vertices, is holds that $\alpha^T
\leq 2t$. In the following proposition we upper-bound the parameter $\alpha^t$,
seen as a function of $t$, which will allow us to prove that the running time
of the algorithm in Theorem~\ref{th:FPTfixedTree} is subexponential.

\begin{proposition}\label{prop:boundALPHA}
$\alpha^t \ = \ 2^{\mathcal{O}(t \cdot \frac{\log \log t}{\log t})} \ = \
2^{o(t)}$.
\end{proposition}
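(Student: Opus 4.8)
The plan is to bound $\alpha^t$ by a purely combinatorial optimization over the shape of the multiset $F(v)$, and then to reduce it to a known extremal estimate. Fix $t=|V(T)|$ and a rooted tree $T$. For a vertex $v$ with children subtrees forming a multiset with multiplicities $d_1,\dots,d_s$ on $s$ distinct isomorphism types $T_1,\dots,T_s$ of sizes $t_1<t_2<\cdots<t_s$, we have $\alpha_r^T(v)=\prod_{i=1}^s(d_i+1)$, subject to the constraint $\sum_{i=1}^s d_i t_i < t$ (since $T[v_1]\cup\cdots\cup T[v_l]$ together with $v$ and the rest of $T$ fits inside $t$ vertices). Moreover the $t_i$ are \emph{distinct} positive integers, so $\sum_{i=1}^s d_i t_i \ge \sum_{i=1}^s d_i\cdot i \ge$ (worst case) something that forces $s$ and the $d_i$ to be small. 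Thus the first step is: $\log_2 \alpha^t \le \max\big\{\sum_{i=1}^s \log_2(d_i+1)\ :\ s\ge 1,\ d_i\ge 1,\ \sum_{i=1}^s d_i\, i \le t\big\}$, where I have replaced $t_i$ by $i$ (the smallest possible distinct sizes) to get an upper bound, and dropped the strictness.

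The second step is to analyze this optimization. Since $\log_2(d_i+1)$ is concave and increasing while the ``cost'' $d_i\cdot i$ is linear, at the optimum we want to spend the budget on small indices $i$; a Lagrange-multiplier / exchange argument shows the optimal $d_i$ decays roughly like $d_i \approx c/i$ for $i$ up to some cutoff $s$, with $s$ determined by $\sum_{i\le s} d_i i \approx cs \le t$, hence $c \approx t/s$. Then $\sum_{i\le s}\log_2(d_i+1) \approx \sum_{i\le s}\log_2(1+\tfrac{t}{si}) $. Balancing $s$: if $s$ is of order $t/\log t$ then $c=t/s$ is of order $\log t$, and the sum is $O\!\big(s\log\log t\big)=O\!\big(\tfrac{t\log\log t}{\log t}\big)$; if $s$ is much larger the per-term gain collapses, and if $s$ is much smaller we lose a factor. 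This yields $\log_2\alpha^t = O\!\big(t\cdot \tfrac{\log\log t}{\log t}\big)$, i.e. $\alpha^t = 2^{\mathcal{O}(t\log\log t/\log t)} = 2^{o(t)}$, as claimed.

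Alternatively — and this is the cleaner route I would actually write up — one observes that $\alpha_r^T(v)=\prod(d_i+1)$ is exactly the number of \emph{divisors-like} quantity, and more to the point it is bounded by the maximal value of $\prod_i(d_i+1)$ over partitions-with-distinct-parts of an integer $\le t$; this is essentially the classical estimate that the maximum number of divisors $d(N)$ of an integer $N\le 2^t$ satisfies $d(N)=2^{O(\log N/\log\log N)}$ (Wigert's theorem). Concretely, encode a feasible $(d_i)$ as the integer $N=\prod p_i^{d_i}$ where $p_i$ is the $i$-th prime; then $\log_2 N = \sum d_i \log_2 p_i = \Theta(\sum d_i\, i\log i) = \Theta(t\log t)$ in the extremal regime, and $\alpha_r^T(v)=\prod(d_i+1)=d(N)$, so $\alpha^t \le \max_{\log N = O(t\log t)} d(N) = 2^{O(\log N/\log\log N)} = 2^{O(t\log t/\log(t\log t))} = 2^{O(t\log\log t/\log t)}$ after simplifying $\log\log N = \log t + O(\log\log t)$ against the extra $\log t$ in $\log N$. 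Wait — this needs care in matching constants, so in the write-up I would keep the bound loose and only claim the $2^{o(t)}$ and the stated $O(t\log\log t/\log t)$ exponent.

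The main obstacle is the constraint bookkeeping in the reduction to the extremal problem: one must correctly track that the subtree sizes $t_i$ are distinct positive integers summing (weighted by multiplicity) to less than $t$, and translate this faithfully into ``$\log N = O(t\log t)$'' for the prime-encoded integer $N$, since a sloppy translation changes the exponent. Once that correspondence is pinned down, invoking Wigert's bound $d(N)=2^{(1+o(1))\log N/\log\log N}$ finishes it; the rest is routine asymptotic simplification of $\frac{t\log t}{\log(t\log t)}$ into $\frac{t\log\log t}{\log t}$, noting $\log(t\log t)=\log t+O(\log\log t)$ so the ratio is $t\cdot\frac{\log t}{\log t}\big(1-\frac{\log\log t}{\log t}+\cdots\big)$ — hmm, that gives $\Theta(t)$, not $o(t)$, so in fact the honest bound from this encoding is only $\alpha^t \le 2^{\Theta(t)}$ unless one is more careful about which $N$'s actually arise. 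Consequently the self-contained Lagrangian argument of the second paragraph, giving the $s\approx t/\log t$ cutoff directly, is the one I would trust for the stated $2^{\mathcal{O}(t\log\log t/\log t)}$ bound, and I would present the prime-encoding only as intuition.
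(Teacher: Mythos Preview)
Your argument has a genuine gap at the very first reduction step. You assert that the sizes $t_1<t_2<\cdots<t_s$ of the $s$ distinct isomorphism types are \emph{distinct} integers, and then replace $t_i$ by $i$ to obtain the relaxed constraint $\sum_i d_i\cdot i\le t$. But non-isomorphic rooted trees can perfectly well have the same size (there are, asymptotically, $\Theta(d^{\ell}\ell^{-3/2})$ rooted trees on $\ell$ vertices), so the $t_i$ are not distinct and the inequality $t_i\ge i$ fails badly: after sorting, the correct lower bound is only $t_i=\Omega(\log i)$. This makes your ``relaxation'' actually a \emph{tightening}, and the resulting upper bound is invalid. Concretely, your optimization problem $\max\sum_i\log_2(d_i+1)$ subject to $\sum_i i\,d_i\le t$ has value $\Theta(\sqrt{t})$ (the Lagrangian gives $d_i+1\propto 1/i$ with cutoff $s\approx\sqrt{2t}$, not $s\approx t/\log t$ as you claim), hence would yield $\alpha^t\le 2^{O(\sqrt t)}$. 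But $\alpha^t\ge 2^{\Omega(t/\log t)}$: attach to a root one copy of every rooted tree of size at most $k$, giving $s\approx d^k$ types, each with $d_i=1$, and $t\approx k\,d^k$; then $\alpha=2^s=2^{\Theta(t/\log t)}\gg 2^{\sqrt t}$. So the bound you derive is strictly smaller than the truth.

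What the paper does instead is to bound $s$ directly via tree enumeration: an exchange argument shows that in an extremal configuration every rooted tree of size at most $k$ must occur in $F(r)$ (otherwise swap the largest type for the missing one and pad with singletons, strictly increasing $\prod(d_i+1)$); combined with the asymptotics for the number of rooted trees this forces $s\le c\,t/\log t$. Then one uses only the crude constraint $\sum_i d_i\le t$ and AM--GM to get $\prod(d_i+1)\le (t/s+1)^s$, which is increasing in $s$ on the allowed range and evaluates to $2^{O(t\log\log t/\log t)}$ at $s=c\,t/\log t$. The enumeration of rooted trees is the ingredient your argument is missing; once you have $t_i=\Omega(\log i)$ rather than $t_i\ge i$, a Lagrangian approach along your lines can also be made to work, but it needs that input.
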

\begin{proof} Let $t$ be an integer and let $T$ be a tree which maximizes $\alpha^t$, i.e., a
tree $T$ for which $\alpha^T = \alpha ^t$. Let $r$ be a root of $T$ maximizing
$\alpha_r^T$, and let $v$ be a vertex of $T$ such that $v$ maximizes $\alpha_r
^T(v)$. We claim that we can assume that $v=r$. Note that if $v \neq r$, then
it holds that $\alpha_v^T(v) > \alpha_r^T(v)$, as on the left-hand side we have
one more child of $v$ that contributes to $\alpha_v^T(v)$. Assume for
contradiction that $\alpha_r^T(v) > \alpha_r^T(r)$. Then, by the previous
inequality it holds that $\alpha_v^T(v) > \alpha_r^T(v) > \alpha_r^T(r)$,
contradicting the choice of $r$. Therefore, we assume henceforth that $v=r$.

Let $T_1,T_2,...,T_s$ be all the non-isomorphic rooted trees of $F(r)$ sorted
by increasing size (where size means number of vertices). As defined before, we
denote by $d_i$ the number of occurrences of $T_i$ in $F(r)$. By simplicity in
the sequel, let us denote by $k+1$ the size of $T_s$. We first want to find an
upper bound on $s$.

We will need the fact that the number of unlabeled rooted trees of size $\ell$
is asymptotically equal to $a \cdot d^{\ell} {\ell}^{-3/2}$, where $a \simeq
0.4399$ and $d \simeq 2.9558$ (see for instance~\cite[Chapter VII.5]{FlPh09}).
It follows that there exist two constants $c,c'$ and a constant $d$ such that
for all $\ell$, the number $N_t(\ell)$ of unlabeled rooted trees of size $\ell$
satisfies
\begin{equation}
\label{numbtrees} c \cdot d^{\ell} {\ell}^{-3/2}\ \leq\ N_t(\ell)\ \leq\ c'
\cdot d^{\ell} {\ell}^{-3/2}.
\end{equation}

\begin{claimN}\label{maxnumbtrees}
There exists a constant $c$ such that $s \leq c t/ \log t$.
\end{claimN}
\begin{proof}
Let us first prove by contradiction that all the trees of size at most $k$
appear in $F(v)$. Assume that there is a tree $T_a$ of size at most $k$ which
is not in $F(v)$. Let $T'$ be the same tree as $T$, rooted at $r'$, except that
we replace all the occurrences of $T_s$ in $F(r')$ by occurrences of $T_a$.
Since the size of $T_a$ is at most $k$, $T'$ contains strictly less vertices,
say $l$, than $T$. Thus in order to have $t$ vertices, we attach to the root of
$T'$ $l$ new trees isomorphic to the singleton-tree. Note that $T'$ has also
size $t$. Let us calculate the difference between $\alpha^T_r(r)$ and
$\alpha^{T'}_{r'}(r')$. Note that by maximality of $T$, we have $\alpha^T_r(r)
\geq \alpha^{T'}_{r'}(r')$. For all $1\leq i \leq s-1$ such that $T_i$ is not
the singleton-tree, by construction the number of trees isomorphic to $T_i$ in
$F(r)$ in $T$ is equal to the number of occurrences of $T_i$ in $F(r')$ in
$T'$. Thus $d_i =d_i'$ for all $1\leq i \leq s-1$ when $T_i$ is not the
singleton-tree. Since, by construction, there are the same number of
occurrences of $T_s$ in $F(r)$ and of $T_a$ in $F(r')$, we have $d_s=d_a$. Thus
the only difference between $\alpha_r^T(r)$ and $\alpha_{r'}^{T'}(r')$ is the
term corresponding to the singleton-tree. If the number of occurrences of the
singleton-tree in $F(r)$ is $d^*$, then the number of occurrences of the
singleton-tree in $F(r')$ is $d^*+l$. Thus $\alpha^{T'}_{r'}(r') >
\alpha^T_{r}(r)$, which contradicts the maximality of $T$.

Thus, we can assume that all the non-isomorphic rooted trees of size at most
$k$ appear in $F(r)$. By Equation~(\ref{numbtrees}), there are at least $c
\cdot d^k k^{-3/2}$ non-isomorphic rooted trees of size $k$. Therefore, if
there is a tree of size $k+1$ in $F(r)$, then necessarily
$$|V(T)|\ = \ t\ \geq\ k \cdot c' \cdot d^k k^{-3/2}\ =\ c \cdot d^k / \sqrt{k}.$$
Note that, in particular, we have that $k \geq \log t$ for $t$ large enough.
Indeed, when we replace $k$ by $\log t$, the inequality is satisfied. In the
following, when we write $\log$ we mean $\log_d$.

Since one can easily check that the number of rooted trees of size at most
$k+1$ is less that the number of rooted trees of size exactly $k+2$, and since
$T_s$ has size $k+1$, by the previous inequalities we have
$$ s\ \leq\  c \cdot d^{k+2} \cdot (k+2)^{-3/2}\ \leq\ c \cdot d^2 \cdot d^k /k^{3/2}
\ \leq\ c_1 / k \cdot d^k/\sqrt{k}\ \leq\ c_2 t/k\ \leq\  c_2 t/ \log t,$$
where the last inequality is a consequence of the fact that $k \geq  \log t$.
Thus, there exists a constant, called again $c$ for simplicity, such that $s
\leq c t / \log t$, as we wanted to prove.
\end{proof}

We now state a useful claim.

\begin{claimN}\label{maxiproduct}
Let $x_1,\ldots,x_k$ be some real variables and let $P$ be the polynomial such
that $P(x_1,\ldots,x_k)= \prod_{i=1}^k x_i$. Under the constraint $\sum_{i=1}^k
x_i= \ell$, the polynomial is maximized when $x_i=\ell/k$ for all $1 \leq i
\leq k$.
\end{claimN}
\begin{proof}
Assume for contradiction that this is not the case. Then by symmetry we can
assume that $x_1 > \ell/k$. Thus there exists another value, say $x_2$, such
that $x_2 < \ell/k$. Let $\varepsilon = \min\{x_1 - \ell/k, \ell/k - x_2\}$. To
compare the two values of the polynomial, we just have to compare the product
$x_1 \cdot x_2$. One can easily verify that $x_1 \cdot x_2 < (x_1 -
\varepsilon)(x_2 + \varepsilon)$, contradicting the fact that the polynomial
was maximized.\end{proof}

Claim~\ref{maxiproduct} ensures that $\alpha^T$ is maximized when all the
$d_i$'s are equal. Since $T$ contains $t$ vertices and since each tree contains
at least one vertex, it holds that $\sum_{i=1}^s d_i \leq t$. Thus the function
$\alpha^T$ is maximized when we have $d_i = t/s$ for all $1 \leq i \leq k$,
i.e.,
$$\alpha^T\ \leq\ \max_{ 1 \leq s \leq ct/\log t}\prod_{i=1}^{s}(d_i+1)\ \leq\ \max_{ 1 \leq s \leq ct/\log t}{(t/s+1)^s}.$$

\begin{claimN}\label{incrfunct}
Let $t$ be a large enough integer. The real function $f_t:x \mapsto (t/x+1)^x$
is increasing in the interval $[1,c t/\log t]$.
\end{claimN}
\begin{proof}
The derivative of the function $f_t$ is the following
$$f'_t(x)= \exp(x \log(t/x+1)) \cdot (\log(t/x+1)-1/(1+t/x)).$$
Note that the first term is always a positive function. We also have $1/(1+t/x)
\leq 1$. Thus we have $\log(t/x+1)-1/(1+t/x) \geq \log(t/x+1) -1 \geq 0$ when
$t$ is large enough, since $x \in [1,ct/\log(t)]$.
\end{proof}

Since $f_t$ is an increasing function by Claim~\ref{incrfunct}, and since $s
\leq c t/ \log t$ by Claim~\ref{maxnumbtrees}, we have that
$$\alpha^T\ \leq\ (\log t/c+1)^{c t/\log t}\ \leq\ 2^{c' \cdot t \cdot \frac{\log \log t}{\log t}}\ ,$$
for some constant $c'$, which completes the proof of
Proposition~\ref{prop:boundALPHA}.\end{proof}

We are now ready to state Theorem~\ref{th:FPTfixedTree}, which should be
compared to Theorem~\ref{thm:treeHARD} in Section~\ref{sec:hardTrees}. We use
Proposition~\ref{prop:boundALPHA} to conclude that the running time is
subexponential.


\begin{theorem}
\label{th:FPTfixedTree} Let $T$ be a given tree. There exists an \textsc{FPT}
algorithm to solve $\{ T \}$-\textsc{Dominating Set} in a circle graph on $n$
vertices, when parameterized by $t = |V(T)|$, running in time
$\mathcal{O}(\alpha^T \cdot n^{\mathcal{O}(1)}) = 2^{o(t)} \cdot
n^{\mathcal{O}(1)}$. In particular, if $T$ has bounded degree, $\{ T
\}$-\textsc{Dominating Set} can be solved in polynomial time in circle graphs.
\end{theorem}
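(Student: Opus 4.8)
The plan is to adapt Algorithm~\ref{algTree} from Theorem~\ref{treepoly} to keep track not only of the \emph{size} of a valid tree for a region, but of which \emph{rooted subtrees of $T$} can be realized as valid trees for that region, and then to argue that the number of such ``states'' we need to store per region is bounded by $\alpha^T$, which by Proposition~\ref{prop:boundALPHA} is $2^{o(t)}$. First I would root $T$ at an arbitrary vertex $r$. Instead of computing, for each region $ab-cd$, the minimum size of a valid tree, I would compute the set of (isomorphism types of) rooted forests $F$ that (i) are valid for $ab-cd$ in the sense of Theorem~\ref{treepoly}, and (ii) are isomorphic to a sub-multiset of $F(v)$ for some $v\in V(T)$ — i.e.\ each connected component of $F$, rooted at its unique chord crossing from $[a,c]$ to $[d,b]$, is one of the $T[v_i]$'s. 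The crucial point is that the ``split'' condition forces each component of a valid forest to have a canonical root (the crossing chord), so a valid forest carries a well-defined multiset of rooted trees, and we only ever care about which sub-multiset of some $F(v)$ it matches.

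The key steps, in order, are as follows. (1) Observe that the gluing operations Properties~\textbf{T1} and~\textbf{T2} behave well with respect to this extra rooted-tree bookkeeping: \textbf{T1} takes the (disjoint) union of two valid forests, so it corresponds to taking the union of the two associated rooted multisets; \textbf{T2} adds a new chord $uv$ joining the roots of the two sub-forests under a common new root, which corresponds exactly to forming a new rooted tree whose children-forest is the union of the two multisets plus (when relevant) the singleton. So the dynamic programming of Algorithm~\ref{algTree} goes through verbatim, except that each table entry $v^1_{ab,cd}$ (forests) now ranges over sub-multisets of the various $F(v)$, and each entry $v^2_{ab,cd}$ (trees) over rooted subtrees $T[v]$. (2) Bound the number of distinct forest-states attached to a single region: a valid forest for a region that is a candidate for some $F(v)$ is, up to isomorphism, a sub-multiset of $F(v)$, and there are exactly $\alpha^T_r(v)=\prod_i(d_i+1)\le\alpha^T$ of those; summing over the at most $t$ vertices $v$ gives $O(t\cdot\alpha^T)$ states per region, which is $2^{o(t)}$. (3) Bound the running time: there are $O(n^4)$ regions, and for each combination arising in the inner loop of Algorithm~\ref{algTree} we now also try to match rooted-forest states; since $O(n^2)$ chords and $O(n^4)$ pairs of regions are considered, and each carries $2^{o(t)}$ states, the total is $\alpha^T\cdot n^{O(1)}=2^{o(t)}\cdot n^{O(1)}$, using that isomorphism tests between rooted trees of size $\le t$ take time polynomial in $t$ and can be precomputed. (4) Conclude correctness exactly as in Claims~\ref{cl:oneSenseOfInequality}--\ref{claim:lastClaimTrees}: a region $ab-cd$ with a valid tree-state equal to $T$ itself, rooted at $r$, and with no chord strictly inside $[b,a]$ or $[c,d]$, yields a $\{T\}$-dominating set by Claim~\ref{extraforTree} and Claim~\ref{oneeachside}; conversely a $\{T\}$-dominating set, cut at an edge $uv$ of $T$ that disconnects it (or, if $r$ has degree $1$, handled by the \textbf{T2}-with-empty-$F_2$ case), produces such a region and state. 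Finally, when $T$ has bounded degree, $\alpha^T\le t\cdot 2^{\Delta-1}=O(t)$, so the whole algorithm runs in polynomial time.

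The main obstacle I expect is step~(1)/(2): making sure that the rooted-multiset abstraction is a \emph{sound and complete} refinement of the region-DP, i.e.\ that every valid tree for a region which is a sub-structure of $T$ arises from gluing strictly smaller valid sub-structures of $T$ via \textbf{T1}/\textbf{T2}, with the rooting induced on the pieces again matching subtrees/forests of $T$. This requires checking that when we decompose a valid tree $T'$ for $ab-cd$ at its unique crossing chord $uv$ (as in the proof of Claim~\ref{cl:ineqTrees}), the two resulting forests $F_1,F_2$, rooted at $u$ and $v$ respectively, are exactly of the form ``a sub-multiset of children-subtrees hanging below some vertex of $T'$'', and hence of $T$ — this is where the fact that $T'$ is a \emph{subtree} of $T$ (not merely abstractly a tree) must be propagated. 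The counting bound $\alpha^T_r(v)=\prod(d_i+1)$ then literally counts the admissible forest-states at a region once we know which vertex $v$ of $T$ the forest sits below, and Proposition~\ref{prop:boundALPHA} supplies the subexponential bound; everything else is a routine modification of Algorithm~\ref{algTree} and its correctness proof.
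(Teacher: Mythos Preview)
Your proposal is correct and follows essentially the same approach as the paper: root $T$, refine the region-DP of Theorem~\ref{treepoly} by replacing the size counters with boolean states indexed by sub-multisets of $F(v)$ (forests) and by subtrees $T[v]$ (trees), upgrade Properties~\textbf{T1}/\textbf{T2} to the labelled versions (the paper calls them \textbf{F1}/\textbf{F2}), and bound the number of states per region by $\alpha^T$, invoking Proposition~\ref{prop:boundALPHA} for the subexponential bound and the remark $\alpha^T\le t\cdot 2^{\Delta-1}$ for the bounded-degree case. The only minor slip is the phrase ``plus (when relevant) the singleton'' in your description of \textbf{T2}: the new chord $uv$ becomes the root $y$ itself, and its children-forest is exactly the union of the two multisets, with no extra term.
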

\begin{proof}
The idea of the proof is basically the same as in the proof of
Theorem~\ref{treepoly}. The main difference is that in the proof of
Theorem~\ref{treepoly}, when Properties~\textbf{T1} or~\textbf{T2} are
satisfied, we can directly apply them and still obtain a forest or a tree. In
the current proof, when we make the union of two forests, we have to make sure
that the union of the two forests is still a subforest of $T$, and that we can
correctly complete it to obtain the desired tree $T$. For obtaining that, we
will apply the two properties stated below, whenever it is possible to create
forests which are induced by the children of the same vertex of $T$. Let us
first give some intuition on the algorithm.


In the following we consider the tree $T$ rooted at an arbitrary vertex $r$.
Let $w_1,\ldots,w_l$ be some vertices of $T$ which are children of the same
vertex $y$. The \emph{subforest of $T$ induced by $w_1,\ldots,w_l$}, denoted by
$F(w_1,\ldots,w_l)$, is the forest $T[w_1] \cup T[w_2] \ldots \cup T[w_l]$.


Roughly speaking, the idea of the algorithm is to exhaustively seek, for each
region $ab-cd$ and any possible subforest $F$ of $F(v)$ for every vertex $v$ in
$T$, a valid forest for $ab-cd$ isomorphic to $F$, and then try to grow it
until hopefully obtaining the target tree $T$. Note that if a vertex $v$ of $T$
has $k$ children, there are a priori $2^k$ possible subsets of children of $v$,
which define $2^k$ possible types of subforests in $F(v)$. But the key point is
that if some of the trees in $F(v)$ are isomorphic, some of the choices of
subsets of subforests will give rise to the same tree. In order to avoid this
redundancy, for each vertex $v$ of $T$, we partition the trees in $F(v)$ into
isomorphism classes, and then the choices within each isomorphism class reduce
to choosing the multiplicity of this tree, which corresponds to the parameter
$d_i+1$ (as we may not choose any copy of it) defined before the statement of
Proposition~\ref{prop:boundALPHA}. Note that carrying out this partition into
isomorphism classes can be done in polynomial time (in $t$) for each vertex of
$T$, using the fact that one can test whether two rooted trees $T_1$ and $T_2$
with $t$ vertices are isomorphic in $\mathcal{O}(t)$ time~\cite{AHU74}.

Therefore, if we proceed in this way, the number of such subforests for each
vertex $v \in V(T)$ is at most $\alpha_r^T(v)$. As we repeat this procedure for
every node of $T$, the cost of this routine per vertex is at most $\alpha_r^T =
\max_{\{v \in V(T)\}} \alpha_r^T(v)$. And as we chose the root arbitrarily, it
follows that the function can be upper-bounded by $\alpha^T = \max_{\{r
{\scriptsize\mbox{ root of $T$}\}}} \alpha_r^T$, which in turn can be
upper-bounded by $\alpha^t  = \max_{\{T: |V(T)|=t\}} \alpha^T$, which is a
subexponential function by Proposition~\ref{prop:boundALPHA}. We would like to
note that this step is the unique non-polynomial part of the algorithm.

Let us now explain more precisely the outline of the algorithm. An induced
subtree $T_1$ of the input circle graph is \emph{valid} for a region $ab-cd$
and a tree $T[w]$, if it is valid for $ab-cd$, and if, in addition, there is an
isomorphism between $T_1$ and $T[w]$ for which the unique chord between $[a,c]$
and $[b,d]$ corresponds to $w$. A forest $F_1$ is \emph{valid} for a region
$ab-cd$ and $F(w_1,\ldots,w_l)$, if it is valid for $ab-cd$, and if there is an
isomorphism between $F_1$ and $F(w_1,\ldots,w_l)$ for which the unique chord
between $[a,c]$ and $[b,d]$ of each connected component $T[w_i]$ corresponds to
vertex $w_i$. Let us now state the two properties that correspond to
Properties~\textbf{T1} and~\textbf{T2} of Theorem~\ref{treepoly}.



\begin{enumerate}

\item[\textbf{F1}]\label{r1fpt}
Let $F_1$ and $F_2$ be two valid forests for $ab-cd$ and $F(v_1,\ldots,v_l)$,
and for $ef-gh$ and $F(w_1, \ldots, w_m)$, respectively. Assume in addition
that $a \leq c \leq e \leq g \leq h \leq f \leq d \leq b$. Assume also that,
for all $1 \leq i \leq l$, $1 \leq j \leq m$, the vertices $v_i$ and $w_j$ are
pairwise distinct and are children of the same vertex $y$ of $T$. If there is
no chord with both endpoints either in $[c,e]$ or in $[f,d]$, then $F_1 \cup
F_2$ is valid for $ab-gh$ and $F(v_1,\ldots,v_l,w_1,\ldots,w_m)$.


\item[\textbf{F2}]\label{r2fpt} Let $F_1$ and $F_2$ be two
valid forests for $ab-cd$ and $F(v_1,\ldots,v_l)$, and for $ef-gh$ and
$F(w_1,\ldots,w_m)$, respectively ($F_2$ being possibly empty), and let $uv$ be
a chord of the input graph $C$. Assume that $u \leq a \leq c \leq e \leq g \leq
v \leq h \leq f \leq d \leq b$ and that there is no chord with both endpoints
either in $[u,a]$, or in $[g,v]$, or in $[v,h]$, or in $[b,u]$. Assume also
that there exists a vertex $y$ of $T$ with exactly $l+m$ children
$v_1,\ldots,v_l,w_1,\ldots,w_m$
Then $F_1 \cup F_2 \cup \{uv\}$ is a tree which is valid for $df-ce$ and
$T[y]$. When $F_2$ is empty, we consider that $e,f,g,h$ correspond to the point
$v$.
\end{enumerate}

In the proof of Theorem \ref{treepoly}, we have seen that the validity of the
corresponding regions is satisfied. Thus, we just have to verify that the tree
or the forest which is created is isomorphic to the target tree $T$, and that
the chords with one endpoint in each side are children of the same vertex. The
union of the two isomorphisms, and the fact that the chords with one endpoint
in both sides are the children of $y$, ensures that both properties are true.
Indeed, for example for Property~\textbf{F2}, since the chords with one
endpoint in each interval are exactly the children of $y$, it holds that the
chord corresponding to the vertex $y$ intersects exactly its children.

For each region $ab-cd$ and each tree $T[w]$, we define a boolean variable
$b_{ab,cd,w}^t$, which is set to `true' if and only if there is a valid tree
for $ab-cd$ and $T[w]$. For each region $ab-cd$ and each forest
$F(w_1,\ldots,w_l)$, we define a boolean variable $b_{ab,cd,w_1,\ldots,w_l}^f$
which is set to true if and only if there is a valid forest for $ab-cd$ and
$F(w_1,\ldots,w_l)$. (For the sake of simplicity, we distinguish between trees
and forests, but we would like to stress that it is not strictly necessary for
the algorithm.)

By a dynamic programming similar to Algorithm~\ref{algTree} in Theorem
\ref{treepoly}, we can compute all the regions $ab-cd$ and all vertices $v$ of
$T$ for which $b_{ab,cd,v}^t=\mbox{true}$ (and the same for forests). If there
is a region $ab-cd$ for which $b_{ab,cd,r}^t=\mbox{true}$, and such that there
is no chord with both endpoints either in $[b,a]$ or in $[c,d]$, then the tree
$T$ dominates all the chords in the input circle graph $C$. Indeed, the
safeness of Properties~\textbf{F1} and~\textbf{F2} ensures that there is a
valid tree isomorphic to $T$ for the region $ab-cd$. And
Claim~\ref{claim:lastClaimTrees} in the proof of Theorem~\ref{treepoly} ensures
that this tree is indeed a dominating tree.

Note that, indeed, the unique non-polynomial step of the algorithm consists in
generating the collection of non-isomorphic subforests, which are at most
$\alpha^T$ many. Thus, the dynamic programming algorithm runs in time
$\mathcal{O}(\alpha^T \cdot n^{\mathcal{O}(1)})$. Again, we did not make any
effort to optimize the degree of the polynomial in the running time.
\end{proof}

\noindent \textbf{Acknowledgment}. We would like to thank Sylvain Guillemot for
stimulating discussions that motivated some of the research carried out in this
paper.






\bibliography{circle}
\bibliographystyle{abbrv}

\end{document}